\def\Rset{\mathbb{R}}
\def\Nset{\mathbb{N}}
\def\arsinh{\operatorname{arsinh}}
\def\esssup{\operatorname{esssup}}
\def\argmax{\operatorname*{argmax}}
\def\sign{\operatorname{sign}}
\def\D{\mathcal{D}}
\def\E{\mathcal{E}}
\def\O{\mathcal{O}}
\def\one{\mathbbm{1}}
\newcommand{\minpdf}[3]{g_{#1,#2,#3}}
\newcommand{\gtd}[3]{\mathfrak{c}_{#1,#2,#3}}
\def\support{\mathcal{X}}
\def\gauss{g}
\def\pdf{f}
\def\cdf{F}
\def\ccdf{\overline{F}}
\def\pdfcm{h}
\newcommand{\wcdf}[1]{F_{#1}}
\newcommand{\hypgeom}[2]{{\mbox{}_{#1}F_{#2}}}
\newcommand{\descort}[2][]{\mathfrak{E}_{#1}\ifthenelse{\isempty{#2}}{}{ {\left[ #2 \right]}}}
\newcommand{\escort}[2][]{\varepsilon_{#1}\ifthenelse{\isempty{#2}}{}{ {\left[ #2 \right]}}}
\definecolor{rougeG}{rgb}{.76,0,.12}
\definecolor{vertG}{rgb}{.07,.56,.25}
\newtheorem{theorem}{Theorem}
\newtheorem{corollary}{Corollary}
\newtheorem{lemma}{Lemma}
\newtheorem{definition}{Definition}
\newtheorem{property}{Property}
\newtheorem{remark}{Remark}
\title{Some informational inequalities involving generalized trigonometric functions and a new class of generalized moments}
\author[1]{D. Puertas-Centeno}
\affil[1]{Área de Matemática Aplicada, ESCET, Universidad Rey Juan Carlos, C/ Tulipán s/n, 28933 Móstoles, Madrid, Spain}
\author[2,3]{S. Zozor}
\affil[2]{ Univ. Grenoble Alpes, CNRS, Grenoble INP*, GIPSA-Lab, 38000 Grenoble, France}
\affil[*]{Institute of Engineering Univ. Grenoble Alpes}
\affil[3]{Instituto de Física La Plata (IFLP), CONICET-UNLP, 1900 La Plata, Argentina}
\date{\today}
\begin{document}
\maketitle

\begin{abstract}
In this work, we define a family of probability densities involving the generalized trigonometric functions defined by Drábek and Manásevich~\cite{Drabek99}, which we name Generalized Trigonometric Densities. We show their relationship with the generalized stretched Gaussians and other types of laws such as logistic, hyperbolic secant, and raised cosine probability densities. We prove that, for a fixed generalized Fisher information, this family of densities is of minimal Rényi entropy. Moreover, we introduce generalized moments via the mean of the power of a deformed cumulative distribution. The latter is defined as a cumulative of the power of the probability density function, this second parameter tuning the tail weight of the deformed cumulative distribution. These generalized moments coincide with the usual moments of a deformed probability distribution with a regularized tail.
We show that, for any bounded probability density, there exists a critical value for this second parameter below which the whole subfamily of generalized moments is finite for any positive value of the first parameter (power of the moment). In addition, we show that such generalized moments satisfy remarkable properties like order relation w.r.t. the first parameter, or adequate scaling behavior. Then we highlight that, if we constrain such a generalized moment, both the Rényi entropy and generalized Fisher information achieve respectively their maximum and minimum for the generalized trigonometric densities. Finally, we emphasis that GTDs and cumulative moments can be used to formally characterize heavy-tailed distributions, including the whole family of stretched Gaussian densities.
\end{abstract}

% ========================= Introduction ========================= % 

\section{Introduction}

Over the last decades, the design of informational tools to grasp different aspects of probability density's spreading has known a regain of active interest~\cite{Lesne14, Jizba04, Jizba04b, Tsallis22}. Rényi entropy has been proven to be related to fractal dimensions and multifractal analysis~\cite{Chhabra89, Beck93, Harte01, Chen20} and also to Kolmogorov entropy~\cite{Zmeskal13} (also called \textit{metric entropy}, see~\cite{Garrido11}). In the discrete context, Tsallis entropy~\cite{Tsallis22} (one-to-one mapped with Rényi's one) has been proven to be the only composable\footnote{An entropy function is said to be composable when the entropy of a compound system can be calculated in terms of the entropy of its independent components~\cite{Enciso17}.} entropy function with trace form\footnote{An entropy function $S$ of a probability set $p=\{p_i\}$ is said to be with trace form when there is a concave function $f$ such that $S(p)=\sum_i f(p_i)$, with the constraints $f(0)=f(1)=0$. For continuous density function, the sum is to be thought of as an integral. Equivalently, in the quantum case, the entropy of a system described by a density matrix $\hat\rho$ is with trace form when $S$ can be expressed as $S(\hat\rho)= \operatorname{Tr} f(\hat\rho)$.}, by virtue of the Enciso-Tempesta theorem~\cite{Enciso17}.
    
Far beyond their theoretical interest, Shannon and Rényi entropies have a wide range of applications~\cite[\& Refs.]{Cover06}, \cite{Bialynicki84, Bialynicki06, Zozor08}, and among them in the framework of the maximum entropy (MaxEnt) principle advocated by Jaynes in particular~\cite{Jaynes57, Mnatsakanov2008}. For instance, formal reconstruction methods of probability densities from the knowledge of all their moments of integer order are widely studied, but it is well-known that from a numerical point of view, the reconstruction problem is unstable~\cite{Talenti1987} and can even be ill-posed (see Hamburger problem~\cite{Hamburger1920:I, Hamburger1920:II,  Hamburger1921, Stieltjes1884,  Stieltjes1886, Hausdorff1921:I,  Hausdorff1921:II,  Shohat1970, Lasserre2010}). This fact leads some authors to tackle the problem through the lens of the MaxEnt principle~\cite{Biswas2010}, which allows to approximate reconstruction of the probability density from a finite number of fixed moments $\langle |x|^p\rangle$ and maximizing the entropy of the law subject to the moment constraints. The MaxEnt principle is also sometimes used in Bayesian inference to construct a prior as non-informative as possible from fixed moments~\cite{Robert2007}.

While Shannon entropy is used for extensive systems description, Rényi (or equivalently Tsallis) entropy is used in the non-extensive context~\cite{Tsallis22}. However, when dealing with probability densities with heavy-tailed (e.g., modeling rare or sudden events) only a few moments are finite which limits the recourse to the MaxEnt approach to model the law of data. To solve this drawback, the use of fractional moments~\cite{Gzyl2010_Stieltjes, Novi2003} or escort mean values~\cite{Tsallis09} has been investigated. In this work, we define a set of density functionals we will call \textit{cumulative moments}, which provides an infinite and well-defined sequence for any heavy-tailed density with tails decaying as $x^{-\eta}, \: \eta>1$. Moreover, we show that this set allows to reconstruct (under certain assumptions) the probability density in a formal way. Furthermore, we can express the probability density that maximizes the Shannon or Rényi entropies subject to the constraint of fixed cumulative moments in terms of the probability density that maximizes the corresponding entropic functional with the constraint of fixed standard moments.

As for the usual moments, we prove that the cumulative moments are bounded by Shannon and Rényi entropies as well as by the inverse of a generalized Fisher information which extends the relations of the same kind involving usual moments, Shannon or Rényi entropies, and usual Fisher information.

All along the paper, our results heavily lie on the so-called \textit{differential-escort} transformations, which 
consist of a power-like stretching, but keep invariant the probability in each differential interval of the support through a nonlinear change of variables. They are very similar to the combined Sundman transformations introduced a long time ago in~\cite{Sundman1913}, and more recently used in bubble dynamics~\cite{Kudryashov14, Kudryashov15}, or in the study of the differential equations involved in ultrasonic waves~\cite{Gordoa21}. Incidentally, such a transform has the ability to modify the tail of the distribution it is applied to. For example, the differential-escort transformations of an exponential density turn to be $q-$exponential density, maximizing distribution of the Tsallis entropy under first-order moment constraint~\cite{Tsallis88}, this distribution exhibiting a power law decay. Conversely, as we will see in the sequel, any power-law density with connected support can be reversibly transformed into an exponential decaying density (with connected support), and even into a connected compact support density.

A set of mathematical functions playing a central role in this paper is the so-called family of generalized trigonometric functions (GTFs). As far as we know, these functions appear for the first time in literature in Shelupsky's work~\cite{Shelupsky59, Burgoyne64}, but have begun to gain broader interest only since Lindqvist's~\cite{Lindqvist95} and Drábek and Manásevich papers~\cite{Drabek99}. In the last decades, GTFs have been the subject of active research, not only in terms of its mathematical properties~\cite{Lindqvist95, Drabek99, Bhayo12, Takeuchi12, Baricz14, Yin19, Takeuchi21, Miyakawa21, Wang24} but also for their physical applications. For instance, the versatility of GTFs together with the fact they are solutions of an eigenvalue $p-$Laplacian problem has allowed that these functions have been used to study nonlinear spring-mass systems in Lagrangian Mechanics~\cite{Wei12}, to describe vibration properties in some solids~\cite{Cveticanin20a, Vujkov22}, to describe nonlinear oscillators~\cite{Cveticanin20b},  to provide the generalized stiffness and mass coefficients for power-law Euler–Bernoulli beams~\cite{Skrzypacz20}, in quantum gravity~\cite{Shababi16}, in quantum graphs~\cite{Ahrami24}, and very recently to express some exact solutions of a non-linear Schrödinger equation~\cite{Gordoa24}. In this paper, we will see that the family of densities that saturate the above-mentioned inequalities is expressed via the GTFs.

Summarizing, this paper has several purposes. First, we define a family of probability densities named Generalized Trigonometric Densities (GTDs) based on GTFs~\cite{Lindqvist95, Drabek99, Yin19}. These distributions are related to the generalized Gaussian distributions precisely by the differential-escort transformation we will recall. We then highlight that, for a fixed generalized Fisher information, Rényi entropy is minimized for such densities~\cite{Zozor17}.
Secondly, we define an extended family of generalized moments introducing a second parameter tuning the tails of the distribution on which the mathematical mean is taken, in such a way that the weight of the tail can be absolutely regulated in virtue of the properties of the differential-escort transformations previously emphasized. Furthermore, we will show that this extension of the generalized moments upper-bounds the Shannon and Rényi entropies and the inverse of the generalized Fisher information~\cite{Lutwak05}. As we will see, the bound is reached again by the GTDs in a great amount of cases.

The structure of the paper is as follows. Section~\ref{Sec_Preliminary} gives some preliminary notions of generalized trigonometric functions, informational functionals, and some inequalities they satisfy, and on \textit{differential-escort} transformations. In Section~\ref{Sec_CumMom}, we propose a definition of what we name \textit{generalized cumulative moments}, and we study some basic properties they satisfy. In Section~\ref{Sec_densities}, we present the family of generalized trigonometric densities induced by the generalized trigonometric functions, we show their relationship with the generalized Gaussians distributions, and we study the behavior of their generalized moments. In Section~\ref{Sec_CumIneq}, we derive Cramér-Rao and entropy-momentum-like inequalities involving the proposed extension of the generalized moments, Rényi and Shannon entropies, and generalized Fisher information. The family of densities that saturate the inequalities and the minimal bounds are analytically obtained. Finally, in Section~\ref{Sec_Conclusions}, we draw some conclusions and discuss some open problems. 

Note that all along the paper we will place on the framework of univariate probability densities defined as densities w.r.t. the Lebesgue measure, the case for more general reference measure, and/or multidimensional probability density functions remaining as future research directions.

% =========================== Recalls =========================== % 

\section{Brief recalls on generalized trigonometric functions, information measures and stretched differential-escort transforms}\label{Sec_Preliminary}

% -------------------- Generalized Trigonometric functions

\subsection{Generalized trigonometric functions}\label{ssec:Trig}

The so-called \textit{generalized trigonometric} and \textit{generalized hyperbolic functions} are families of functions ruled by two parameters, encompassing the trigonometric and hyperbolic functions, respectively, with which they share relevant characteristics. These families also include some relevant functions in mathematics and physics such as lemniscate function~\cite{Nishimura15, Takeuchi16, Yin22, Zhang23} and some generalizations studied by Ramanujan~\cite{Yin22}, Dixon elliptic functions~\cite{Wei12}, Ateb functions~\cite{Cveticanin20b} among others.

Let us first  give the definition of the generalized $(v,w)-$arcsine function~\cite{Drabek99},
\begin{equation} \label{def:arcsinpq}
\arcsin_{v,w} (y) = \int_0^y \left( 1 - |t|^w\right)^{-\frac1v} dt \, = \, y \: \hypgeom{2}{1} \left(\frac1v \, , \, \frac1w \, ; \, 1 + \frac1w \, ; \, |y|^w \right), \quad y \in (-1,1),
\end{equation}
were $\hypgeom{2}{1}$ denotes the Gauss hypergeometric function, and where, the usual arcsine function is recovered for $v = w = 2$. In the literature, the range of the parameters is often limited to $v, w>1$ ( e.g., in the context of an eigenvalue problem dealing with the $p-$Laplacian operator~\cite{Lindqvist95, Drabek99, Takeuchi12}). However, a simple look at the integral shows that the range of parameters for which the function is well defined (the integral converges) is not restricted to $v, w > 1$. For instance  the $(v,w)-$arcsine function was studied  for $v = w > 0$ in~\cite{Baricz14,Karp15}, for $v>w/(w+1), \; w>1$ in~\cite{Miyakawa21} and for $v>w/(w+1), \; w>0$ in~\cite{Miyakawa22}. In this paper, we explore this function and its related functions for all parameters the integral exists, i.e., $v \in \Rset^* = \Rset\setminus \{0\}$ and $w > 0$. As for the sine function, the generalized $(v,w)-$sine function is then the inverse function of the $(v,w)-$ arcsine function, ``correctly'' periodized. To introduce the  domain of the generalized $(v,w)-$sine function in a compact form, it is useful to adopt the following notation~\cite{Miyakawa21, Miyakawa22},
\begin{equation} \label{def:pipq}
\frac{\pi_{v,w}}2 = \lim_{y \to 1} \arcsin_{v,w}(y) =  \lim_{y \to 1} \int_0^{y} \left( 1 - t^w \right)^{-\frac1v} dt = \left\{
\begin{array}{cll} \displaystyle \frac1w \, B\left(1 -\frac1v \, , \, \frac1w \right) & \mbox{if} &  v\in \Rset \setminus [0,1],\\[5mm]
\infty & \mbox{if} & v \in (0,1],
\end{array}
\right.
\end{equation}
where $B(\cdot\,,\cdot)$ denotes the usual Beta function. Then, the $(v,w)-$generalized sine function $\sin_{v,w}$ is primarily defined on $\left( - \frac{\pi_{v,w}}2 , \frac{\pi_{v,w}}2 \right) \longrightarrow [-1,1]$ as the inverse function of $\arcsin_{v,w}$ given in Eq.~\eqref{def:arcsinpq}. Finally, when $\pi_{v,w}$ is finite ($v  \notin [0 , 1]$), as for the sine function its generalized version is extended to $\left( - \frac{\pi_{v,w}}2 , \frac{3 \pi_{v,w}}2 \right)$ imposing the symmetry $\sin_{v,w}(\pi_{v,w}-x) = \sin_{v,w}(x)$, and to whole $\Rset$ imposing $\pi_{v,w}-$periodicity. Let us mention that, when $v \in (0,1]$, i.e., $\pi_{v,w} = + \infty$, $\sin_{v,w}(x)$ is increasing  with a behaviour similar to the hyperbolic tangent $\operatorname{tanh}$ function~\cite{Miyakawa21}.

The $(v,w)-$generalized cosine function is defined like for the usual case as the derivative of the $(v,w)-$generalized sine function,
\begin{equation}\label{def:cospq}
\cos_{v,w}(x) = \frac{d}{dx}\sin_{v,w}(x).
\end{equation}
Immediately, as noted for the $(v,w)-$arcsine function, the usual sine and cosine functions are recovered for $v = w = 2$. Similar to the usual case, it appears that these functions satisfy an identity of Pythagorean type\footnote{Basically, one uses the inverse function rule applied to $\cos_{v,w}$ using the fact that $ \sin_{v,w}  = \arcsin_{v,w}^{-1}$, and the symmetries of the functions to conclude over $\Rset$.}:
\begin{equation}\label{eq:pyth_rel}
\left| \cos_{v,w}(x) \right|^v + \left|\sin_{v,w}(x) \right|^w = 1.
\end{equation}

On the other hand, following the Miyakawa et al. notation~\cite{Miyakawa21, Miyakawa22}, the generalized $(v,w)-$hyperbolic sine function, $\sinh_{v,w}: \left( -\frac{\pi_{r,w}}2 , \frac{\pi_{r,w}}2 \right) \longrightarrow \Rset$ is defined as the inverse function of 
\begin{equation}\label{def:arcsinhpq}
\arsinh_{v,w} (y) = \int_0^y \left( 1 + |t|^w \right)^{-\frac1v} dt \, = \, y \: \hypgeom{2}{1}\left( \frac1v \, , \, \frac1w \, ; \, 1+\frac 1w \, ; \, -|y|^w \right), \quad y \in \Rset
\end{equation}
where $\pi_{r,w}$ is defined by Eq.~\eqref{def:pipq}, with $r$ given by
\begin{equation}\label{eq:duality_param}
\frac1v + \frac1r = 1 + \frac 1w.
\end{equation}
As for the generalized cosine, the $(v,w)-$hyperbolic cosine function is  defined  as the derivative  of the $(v,w)-$hyperbolic sine function,
\[
\cosh_{v,w}(x) = \frac d{dx} \sinh_{v,w}(x)
\]
In the same line as their trigonometric counterpart, they satisfy the identity
\begin{equation}\label{eq:pyth_rel_hyperbolic}
\left| \cosh_{v,w}(x) \right|^v - \left|\sinh_{v,w}(x) \right|^w = 1.
\end{equation}
Note that $\sinh_{v,w}(x)$ is a strictly increasing function, so that, $\cosh_{v,w}(x)$ is always positive, and thus the absolute value for the first term of the identity can be skipped. Again, the usual hyperbolic functions and related identity are recovered for $v = w = 2$.

Finally, it is important to mention the duality relations recently obtained by Miyakawa et al. in~\cite{Miyakawa21, Miyakawa22} for  $w>1$ and $v>w/(w+1)$, which can be extended, without loss of generality, to any $v\in\Rset^*, \; w>1:$

\begin{equation}\label{eq:duality}
\sinh_{v,w}(x) = \frac{\sin_{r,w}(x)}{\left[ \cos_{r,w}(x)\right]^{\frac{r}{w}}},
\qquad \sin_{v,w}(x) = \frac{\sinh_{r,w}(x)}{\left[ \cosh_{r,w}(x)\right]^{\frac{r}{w}}}
\end{equation}
and
\begin{equation}\label{eq:duality2}
\cos_{v,w}(x) = \frac1{\left[ \cosh_{r,w}(x) \right]^{\frac{r}{v}}},
\qquad \cosh_{v,w}(x) = \frac1{\left[ \cos_{r,w}(x) \right]^{\frac{r}{v}}}
\end{equation}
where $r$ is given by Eq.~\eqref{eq:duality_param}.  The last duality relation, Eq.~\eqref{eq:duality2}, involving the generalized cosine and its hyperbolic counterpart, is relevant to interpret the results we will present in the sequel. 

For completeness,  let us mention the nonlinear differential equations satisfied by the generalized sine and hyperbolic sine functions, which are closely related to what is known as the initial value problem of $p-$Laplacian~\cite{Drabek99}. They can be directly derived by the Pythagorean like relations~\eqref{eq:pyth_rel} and~\eqref{eq:pyth_rel_hyperbolic}, respectively,
\begin{equation*}\left\{\begin{array}{l}
\displaystyle\Big( |u'|^{v-2} u' \Big)' \, + \, \frac{(v-1) \, w}v \, |u|^{w-2} \, u = 0\\[2.5mm]
\displaystyle \Big( |u'|^{v-2} u' \Big)' \, - \, \frac{(v-1) \, w}v \, |u|^{w-2} \, u = 0
\end{array}\right., \qquad u(0) = 0, \quad u'(0) = 1.
\end{equation*}
It is easy to check that  $u_+ = \sin_{v,w}$ is a solution of the first equation, while  $u_- = \sinh_{v,w}$ is a solution of the second equation.

% -------------------- Informational measures and inequalities

\subsection{Some informational measures and inequalities}\label{ssec:inform_basic}

When thinking about ``information'', several points of view can be envisaged. Although C. Shannon did not like this word in such a context, in the so-called information theory field, the notion of information is generally understood through the prism of entropy (or more precisely, entropy is interpreted as a measure of lack of information). However, the word ``information'' also appears in the field of estimation, more precisely through the Cramér-Rao inequality where the variance of any estimator is lower-bounded by the inverse of the Fisher information. In a way, the Fisher information gives the information on a parameter contained in the data. In the sequel, we will indeed focus on the so-called nonparametric Fisher information and nonparametric Cramér-Rao inequality, so that roughly speaking, the Fisher information is related to location, or said differently, to local information or fluctuations on a probability density function. Finally, one can simply think about moments of a random variable that also characterize in a sense some information on the variable (e.g., the location via the statistical mean, the dispersion via the variance, the asymmetry via the third order moment, the impulsivity via the kurtosis/fourth order moment,\ldots). In the following paragraph, we briefly recall these three possible notions of information (entropy, Fisher information, moment) together with some relationships they satisfy.

% ----------

\paragraph{Rényi and Tsallis entropies} As just introduced, the first family of informational measures is of entropic type. More precisely, we concentrate here on the Rényi and Tsallis entropies of $\lambda-$order of a probability density function $\pdf$ defined on $\Rset$ or on a Lebesgue measurable subset of $\Rset$, which are respectively defined as
\begin{equation*}
R_\lambda[\pdf] = \frac1{1-\lambda} \log\left( \int_\Rset [\pdf(x)]^\lambda \, dx \right), \qquad T_\lambda[\pdf] = \frac1{\lambda-1} \left( 1 - \int_\Rset [\pdf(x)]^\lambda \, dx \right),
\end{equation*}
where the case $\lambda = 1$ is to be thought of as the limit of the expression. The integral term can be written as a mathematical (or statistical) mean, $\left\langle \pdf^{\lambda-1}(x) \right\rangle_\pdf$, where the subscript $\pdf$ means that the mean is taken over the statistical distribution $\pdf$. Remind that, contrary to the discrete case, differential Rényi and Tsallis entropies are not necessarily positive quantities. In the sequel, some inequalities will involve the quantity called power Rényi entropy, defined as
$$N_\lambda[\pdf] = e^{R_{\lambda}[\pdf]} = \left\langle\pdf^{\lambda-1}(x)\right\rangle^\frac1{1-\lambda}_\pdf.$$
Precisely, the Shannon entropy
\begin{equation*}
S[\pdf] = - \int_\Rset \pdf(x) \, \log\pdf(x) \, dx.
\end{equation*}
is recovered in the limit case  $\displaystyle \lim_{\lambda \to 1}R_{\lambda}[\pdf]=\lim_{\lambda \to 1} T_{\lambda}[\pdf]=S[\pdf]$. We will simply denote $N[\pdf] = e^{S[\pdf]}$  the power Shannon entropy. Note that Rényi and Tsallis entropies are one-to-one mapped, $T_{\lambda}[\pdf]=\frac{e^{(1-\lambda) R_{\lambda}[\pdf]} - 1}{1 - \lambda}.$ Because the function $\displaystyle r \mapsto \frac{e^{(1-\lambda) \, r} - 1}{1 - \lambda}$ is strictly increasing, given a set of probability densities $\D_c$,  e.g., satisfying some constraints (Fisher information, moments,\ldots), necessarily Rényi and Tsallis entropies are maximized for the same probability density on $\D_c, \: \argmax_{\D_c} R_{\lambda}[\pdf] = \argmax_{\D_c} T_{\lambda}[\pdf]$.

% ----------

\paragraph{$(p,\lambda)-$Fisher information.} The second class of informational measures comes from the statistical field, namely the Fisher information involved in the performance bounds of estimators (parametric context) or the fluctuations of a random variable (nonparametric context).  An extension of the (nonparametric) Fisher information tunes by two parameters was introduced by Lutwak and Bercher~\cite{Lutwak05, Bercher12, Bercher12a} as follows: Given $p>1$ and $\lambda \in \Rset^*$,  the $(p,\lambda)-$Fisher information is defined as
\begin{equation*}
F_{p,\lambda}[\pdf]
\: = \: \int_\Rset \left|\pdf(x)^{\lambda-2} \, \frac{d\pdf}{dx}(x)\right|^{p} \pdf(x) \, dx
\: = \: \frac1{|\lambda-1|^{p}} \, \left\langle \left|\frac {d \pdf^{\lambda-1}}{dx}(x)\right|^p \right\rangle_{\!\pdf},
\end{equation*}
when $\pdf$ is differentiable in the closure of his support. Throughout the paper we adopt the notation $\phi_{p,\lambda}[\pdf]=F_{p,\lambda}[\pdf]^\frac1{p\lambda}$.

Note that the $(2,1)-$Fisher information reduces to the standard Fisher information of a probability density. The $(1,\lambda)-$Fisher information corresponds to  the total variation of $\frac{\pdf^\lambda}{\lambda}$ provided that $\pdf^\lambda$ has bounded variations. When $p \to \infty$, the limit  $\displaystyle \lim_{p \to +\infty} \phi_{p,\lambda}^\lambda$ is the essential supremum of $\left|\left(\frac{\pdf^{\lambda-1}}{\lambda-1}\right)'\right|$ on the support of $\pdf$, provided $\pdf$ is absolutely continuous, and the essential supremum is finite.

% ----------

\paragraph{$p-$th moment and $p-$th absolute moment.} The third wide class of informational measures is given by moments of a probability distribution. Among them, for natural $p$, the (natural) moments of a random variable with probability density function $\pdf$ are defined by
\begin{equation*}
\widetilde{\mu}_p[\pdf] = \int_\Rset x^p \, \pdf(x) \, dx  = \left\langle  \, x^p \, \right\rangle_\pdf.
\end{equation*}
Also of importance the $p-$th absolute moments for $p \geqslant 0$, are defined as 
\begin{equation*}
\mu_p[\pdf] = \int_\Rset |x|^p \, \pdf(x) \, dx  = \left\langle  \, |x|^p \, \right\rangle_\pdf.
\end{equation*}
when the integral converges. The $p-$typical deviation is then defined as the moment to the power $1/p$ (and the limit value when $p \to 0$)
\begin{equation*}
\sigma_p[\pdf] = \displaystyle \ \mbox{for} \ p > 0, \qquad \sigma_0[\pdf] = \lim_{p \to 0} \sigma_p[\pdf]  = \exp\left(\int_\Rset \pdf(x) \, \log|x| \, dx \right)
\end{equation*}
In the limit case $p \to \infty$, the limit of $\sigma_p[\pdf]$ is the essential supremum of the support of $\pdf$, i.e., $\esssup \big\{ \, |x| : \pdf(x) > 0 \, \big\}$. Note that, when it exists, the variance, as well as, the standard deviation are trivially recovered taken $p=2$ for a centered probability density\footnote{Dealing with such moments in terms of ``centrality'', i.e., around a location parameter, is not an obvious task. We postpone such a discussion in the appendices.}.

% ----------

\paragraph{Inequalities involving these informational quantities}
A first inequality involving informational measures make interplaying the generalized Fisher information and the Rényi entropy power. Namely, the Stam inequality and their generalizations with two parameters establish that the product of Rényi entropy power and the $(p,\lambda)-$Fisher information is lower bounded: Given\footnote{For $p \to \infty$ the inequality also holds with $\displaystyle \lim_{p \to +\infty} \phi_{p,\lambda}^\lambda$ being the essential supremum of $\left|\left(\frac{\pdf^{\lambda-1}}{\lambda-1}\right)'\right|$ as discussed before~\cite{Lutwak04}.} $p \in [ 1 , +\infty )$ and $\lambda > \frac1{1+p^*} $~\cite{Lutwak05, Bercher12a, Zozor17},
\begin{equation}\label{ineq:bip_Stam}
\phi_{p,\lambda}[\pdf] \, N_{\lambda}[\pdf] \: \geqslant \: \phi_{p,\lambda}[\gauss_{p,\lambda}] \, N_{\lambda}[\gauss_{p,\lambda}],
\end{equation}
for any continuously differentiable probability density $\pdf$, where the minimizing densities of the product, denoted  $\gauss_{p,\lambda}$, are stretched deformed Gaussian\footnote{This distribution was named under this terminology in~\cite{Zozor17}, but is also known as generalized Gaussians~\cite{Lutwak05} or $q-$Gaussian~\cite{Bercher12a}.} laws given by 
\begin{equation}\label{def:g_plambda}
\gauss_{p,\lambda}(x) \, = \, \frac{a_{p,\lambda}}{\exp_\lambda\left( |x|^{p^*} \right)}
\, = \, a_{p,\lambda}\, \exp_{2-\lambda}\left( - |x|^{p^*} \right)
\end{equation}
with  $p > 1$, $p^* = \frac{p}{p-1}$ the Hölder conjugated of $p$, and where $\exp_\lambda$ denotes the generalized Tsallis exponential\footnote{In the case $p \to 1$,  $p^* \to \infty$, and $\gauss_{1,\lambda}$ turns to be a constant density over a unit length support. In addition, the definition of the stretched Gaussian (and the inequality) can be extended to $p^* \in (0,1)$ (i.e. $p \in (-\infty,0)$). Finally, when $p^* = p = 0$ and $\lambda > 1$,   $\gauss_{0,\lambda} = a_{0,\lambda} (-\log|x|)_+^{\frac1{\lambda-1}}$, with $a_{0,\lambda} = \frac1{2\Gamma\left(\frac{\lambda}{\lambda-1}\right)}.$}
\begin{equation}\label{def:q-exp}
\exp_\lambda(x) = \left( 1 + (1 - \lambda) \, x \right)_+^\frac1{1-\lambda}, \ \ \lambda \ne 1, \qquad \exp_1(x) \: \equiv \: \lim_{\lambda \to 1} \, \exp_\lambda(x) \: = \: \exp(x)
\end{equation}
with the notation $(\cdot)_+ = \max(0,\cdot)$, this stretched deformed Gaussian being defined for\footnote{The restriction on $\lambda$ in the inequality~\eqref{ineq:bip_Stam} is imposed by the non-existence of the right handside if not satisfied.} $\lambda  > 1-p^*$ . When $\lambda >1$ the support of $\gauss_{p,\lambda}$ is  $\support_{p,\lambda} = \left( - (\lambda-1)^{\frac1{p^*}} \: , \: (\lambda-1)^{\frac1{p^*}}\right),$ and $\support_{p,\lambda} = \Rset$ when $\lambda \leqslant 1$.
The normalization constant $a_{p,\lambda}$ is given by
\begin{equation}\label{def_aplambda}
a_{p,\lambda} =  \left\{ \begin{array}{lll}
\displaystyle  \frac{p^*\,|1-\lambda|^\frac1{p^*}}{2\,B\left(\frac1{p^*},\frac\lambda{|1-\lambda|}+\frac{\one_{\Rset_+}(1-\lambda)}{p}\right)},& \text{for}  & \lambda\neq1,\\[7.5mm]
\displaystyle \frac{p^*}{2\Gamma(\frac1{p^*})},&\text{for} &  \lambda=1,
\end{array}\right.
\end{equation}
where $B(\cdot,\cdot)$ denotes the Beta function, and $\one_A$ the indicator function, $\one_A(x) = 1$ if $x \in A$, and $0$ otherwise, and $\Rset_+ \equiv [ 0 , +\infty)$.

A second inequality involves the Rényi entropy power and moments. More specifically, when $p^* \in [0 , \infty )$ and $\lambda>\frac1{1+p^*},$ a  generalized entropy-momentum inequality was derived by Lutwak or Bercher~\cite{Lutwak04, Lutwak05, Bercher12}, which expresses as
\begin{equation}\label{ineq:bip_E-M}
\frac{\sigma_{p^*}[\pdf]}{N_{\lambda}[\pdf]} \, \geqslant \, \frac{\sigma_ {p^*}[\gauss_{p,\lambda}]}{N_{\lambda}[\gauss_{p,\lambda}]}.
\end{equation}
where the minimizers are the same as that of the generalized bi-parametric Stam inequality eq.~\eqref{ineq:bip_Stam}.

Finally, combining  Eqs.~\eqref{ineq:bip_Stam} and~\eqref{ineq:bip_E-M}, one straightforwardly obtains a bi-parametric generalization of the Cramér-Rao inequality, involving now the Fisher information and moments: Given\footnote{In the limit $p \to + \infty$, $p^* \to 1$, the inequality is still valid~\cite{Lutwak04}.} $p \in [1 , +\infty )$ and $\lambda>\frac1{1+p^*},$ 
\begin{equation}\label{ineq:bip_CR}
\phi_{p,\lambda}[\pdf] \, \sigma_ {p^*}[\pdf] \: \geqslant \: \phi_{p,\lambda}[\gauss_{p,\lambda}] \, \sigma_ {p^*}[\gauss_{p,\lambda}].
\end{equation}
Note that for $p=2$ and $\lambda=1$ the usual Cramér-Rao inequality is recovered~\cite{Rao45, Hodges51, Cover06}. 

The inequalities~\eqref{ineq:bip_Stam},~\eqref{ineq:bip_E-M} and~\eqref{ineq:bip_CR} were also extended for probability densities on $\Rset^n$~\cite{Bercher12, Lutwak05}, which goes beyond the scope of the paper. Finally, as we previously mentioned, the generalized Stam inequality Eq.~\eqref{ineq:bip_Stam} was extended in the monodimensional case by decoupling the parameter $\lambda$ of the Rényi entropy power from the second parameter of the $(p,\lambda)-$Fisher information, that is to say, with three free parameters. More precisely, for $p \geqslant 1, \; \beta > 0, \; \lambda > 1 - \beta p^*,$ the following inequality was obtained~\cite{Zozor17}:
\begin{equation}\label{ineq:trip_Stam}
\phi_{p,\beta}[\pdf] \, N_{\lambda}[\pdf] \: \geqslant \: \phi_{p,\beta}[ \minpdf {p}{\beta}{\lambda}] \, N_{\lambda}[ \minpdf {p}{\beta}{\lambda}],
\end{equation}
where the minimizers $ \minpdf{p}{\beta}{\lambda}$ are given by  special functions. We will come back to the latter, given precisely through a transform (stretched differential escort transformation) of generalized  Gaussian densities, which can be expressed through the generalized cosine functions discussed above. It is shown that the density $\gauss_{p,\lambda}$  is naturally recovered in the case $\beta = \lambda$, which trivially follows merely by comparing Eqs.~\eqref{ineq:bip_Stam}-\eqref{ineq:trip_Stam}. We will come back to these functions in more detail in Section~\ref{Sec_densities}, paying special attention to the case  $\lambda \neq 1$ and $\beta \neq \lambda$. These probability densities will be studied from a different point of view than the one used in~\cite{Zozor17}, precisely because of, as evoked, their relationship to the generalized cosine functions Eq.~\eqref{def:cospq} (see definition~\ref{definition:GTDs} and~Eq.~\eqref{def:rho_pbetalambda} later on).

% -------------------- Differential escort

\subsection{Differential-escort transformations} \label{ssec:definition:diff-esc}

Differential-escort transformations~\cite{Puertas19} make up a family of parametric stretching transformations in such a way that when applied to a probability density function, this one is elevated to a given power while the state axis is stretched to keep invariant the probability in the differential intervals. More precisely:
\begin{definition}[Differential-escort transformation]\label{definition:diff-esc}
Let $\pdf: \support \longrightarrow \Rset_+$ be a probability density and $\alpha$ a real parameter. Let $x_0 \in \support$, and $y_0$ be real numbers. The $\alpha-$order differential-escort transformation of $\pdf$ is a probability density $\descort[\alpha]{\pdf}: \support_\alpha \longrightarrow \Rset_+$ defined by
\begin{equation*}
\descort[\alpha]{\pdf}(y) = \left[ \pdf(x(y)) \right]^\alpha,
\end{equation*}
where $x(y)$ is the inverse function of the one-to-one function $y: \support \longrightarrow \support_\alpha = y(\support)$ defined by
\begin{equation*}
y(x) = y_0 + \int_{x_0}^x \left[ \pdf(t) \right]^{1-\alpha} dt
\end{equation*}
with the assumption that the integral converges in a neighborhood of $x_0$. Since the choice of $x_0$ and $y_0$ only implies a translation, we will consider $y_0 = x_0 = 0$ without loss of generality.
\end{definition}
Particularly, these transformations correspond with the so-named \textit{combinated Sundman transformations}, but choosing the parameters to impose the \textit{probability conservation} in the differential intervals. Sundman transformations appear in the study of three-body problem~\cite{Sundman1913} and found recently applications in bubble dynamics~\cite{Kudryashov14, Gordoa21}, ultrasonic waves~\cite{Gordoa21}, or Rayleigh equation~\cite{Kudryashov15}.

In the following lines we summarize some of the basic properties of the transformation $\descort[\alpha]{}$ proved in~\cite{Puertas19, Zozor17}.

First, it is important to remark that these transformations can be composed in a simple way. Indeed, the composition law is naturally inherited from the power of the real numbers,
\begin{property}[Composition property, neutral element, inversion]\label{prop:composition} 
First, one can easily check that the transformation reduces to the identity for $\alpha = 1$,
\[
\descort[1]{\pdf} = \pdf
\]
Then, given any couple of real numbers $\alpha, \, \alpha',$ 
\[
\descort[\alpha]{} \circ \descort[\alpha']{} = \descort[\alpha']{} \circ \descort[\alpha]{} = \descort[\alpha \alpha']{}.
\]
where $\circ$ is the composition operator. As a consequence, for $\alpha \neq 0$ the transformation is invertible and the inverse transformation is given by
\[
\descort[\alpha]{}^{-1} = \descort[\alpha^{-1}]{}.
\]
%\\ 
\end{property} 
Due to the composition property, and excluding the absorbent element $\alpha = 0$, $\left( \{ \descort[\alpha]{} \}_{\alpha \in \Rset^*} \, , \, \circ \right)$ has an abelian group structure. The composition property is a key property involved in the proof of the generalized informational inequalities we will provide in the sequel.

Another important property is the fact that the length of the support of the differential-escort transform $\descort[\alpha]{}$ of any probability density $\pdf$ is given by the $L_{1-\alpha}$ norm of the original density $\pdf$ to the $1-\alpha$~\cite{Puertas19}, as summarized in the following property. 
\begin{property}[Stretching effect on the length of the support]\label{prop:support_length}
The length of the  support $\support_\alpha$ of the differential-escort density $\descort[\alpha]{\pdf}$ is given by 
\begin{equation*}
L(\support_\alpha)=\int_\support [\pdf(t)]^{1-\alpha}\,dt.
\end{equation*}
\end{property}

The differential-escort transformation also has the remarkable property of conserving scaling effect, the scaling being affected by a power in the transformed domain, as summarized below:
\begin{property}[Scaling Property]\label{prop:scaling}
The scaling transformation and the differential-escort transformation can be commuted thanks to raising the scaling to a power: Given $\kappa \in \Rset_+$, and denoting the scaling operation as $\pdf_{(\kappa)}(x) = \kappa \pdf(\kappa x)$,  its $\alpha-$th order \textit{differential-escort} distribution is the scaling transformed of $\descort[\alpha]{\pdf}$ with  the scaling factor $\kappa^\alpha$, 
\begin{equation*}
\descort[\alpha]{\pdf_{(\kappa)}} = \descort[\alpha]{\pdf}_{(\kappa^\alpha)}
\end{equation*}
\end{property}

In the following properties, a regular behavior with the main informational measures considered in this work is highlighted. Specifically, the Rényi entropy power and the  $(p,\lambda)-$Fisher information of the differential-escort transform of a probability density remarkably are the entropy power and the generalized Fisher information themselves, respectively, up to a power and with modified parameters. In detail:
\begin{property}[Stretching effect on  Rényi entropies]
\label{prop:Renyi-esc}
Given $\lambda \in \Rset$, $\alpha \in \Rset$ and  a probability density $\pdf$, the following identity is satisfied,
\begin{equation}\label{Eq:lambda_alpha}
N_{\lambda}\left[ \descort[\alpha]{\pdf} \right] = \left( N_{\lambda_\alpha}[\pdf] \right)^\alpha, \quad \lambda_\alpha = 1 + \alpha \, (\lambda-1).
\end{equation}
\end{property}
\noindent Let us remark that, up to a power, the Shannon entropy power is invariant by the differential-escort transform ($\lambda = 1 \Rightarrow \lambda_\alpha = 1$).

\begin{property}[Stretching effect on the  $(p,\lambda)-$Fisher information]\label{prop:Fisher-esc}
Given $p > 0$,  $\lambda > \frac1{1+p^*}$, $\alpha > 0$ and  a probability density $\pdf$, the following identity is satisfied
\begin{equation*} 
\phi_{p,\lambda}\left[ \descort[\alpha]{\pdf} \right] = |\alpha|^\frac1\lambda \left( \phi_{p,\alpha\lambda}[\pdf] \right)^\alpha.
\end{equation*}
\end{property}

The bi-parametric Stam inequality given Eq.~\eqref{ineq:bip_Stam} was precisely generalized to the three parametric one given in Eq.~\eqref{ineq:trip_Stam} thanks to properties~\ref{prop:Renyi-esc} and~\ref{prop:Fisher-esc}, which made it possible to decouple the Rényi entropy parameter from the second one of the  $(p,\lambda)-$Fisher information~\cite{Zozor17}.

We evoked in the introduction that the differential-escort transformation changes continuously the weight of the tail of the density, if any, and can even make the support of the transformed probability density to be compact (even for an initial distribution with non-compact support). This is summarized in the following Lemma for power-law-decaying densities~\cite{Puertas19}:
\begin{lemma}[Stretching effect on the tail of power law decaying densities]\label{Lemma:tail}
Let $\pdf > 0$ on $\support = \Rset$ be a probability density such that the tails decrease as a power law, $\O\left( |x|^{-\eta} \right)$ for some $\eta > 1$. Then, for $\alpha > \alpha_c  = \frac1{\eta^*}$, the tails of the transformed distribution $\descort[\alpha]{\pdf}$ decrease as $\O\left(|y|^{- \, \frac{\eta\alpha}{1+\eta(\alpha-1)}}\right)$. On the other hand, for $\alpha<\alpha_c$, the distribution $\descort[\alpha]{\pdf}$ has a compact support. Finally,  $\descort[\alpha_c]{\pdf}$ has a non-compact support and an exponential decay.
\end{lemma}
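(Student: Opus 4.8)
The plan is to work directly with the defining relation $\descort[\alpha]{\pdf}(y) = [\pdf(x(y))]^\alpha$, where $x(y)$ inverts $y(x) = \int_0^x [\pdf(t)]^{1-\alpha}\,dt$, and to reduce everything to the asymptotics of this integral along the tail. By positivity and the symmetry of the construction it suffices to treat the positive tail, so I would fix the normalization $\pdf(x) \sim C\,x^{-\eta}$ as $x \to +\infty$ (this is what ``decrease as a power law $\O(x^{-\eta})$'' is taken to mean). The integrand then behaves like $[\pdf(t)]^{1-\alpha} \sim C^{1-\alpha}\,t^{-\eta(1-\alpha)}$, and the entire trichotomy is governed by the sign of the exponent $\gamma := 1 - \eta(1-\alpha) = 1 + \eta(\alpha-1)$, which vanishes exactly at $\alpha = 1 - \tfrac1\eta = \tfrac1{\eta^*} = \alpha_c$. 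Note that the integrand exponent $-\eta(1-\alpha) = \gamma - 1$ is the same whether $\alpha<1$ or $\alpha>1$, so no separate treatment of those subcases is needed.

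First I would dispatch the compact-support regime. For $\alpha < \alpha_c$ one has $\eta(1-\alpha) > 1$, so the tail of the integrand is integrable; invoking Property~\ref{prop:support_length}, the length of the transformed support equals $\int_\support [\pdf(t)]^{1-\alpha}\,dt$, which converges because the power-law upper bound $\O(|t|^{-\eta})$ makes the integrand $\O(|t|^{-\eta(1-\alpha)})$ with $\eta(1-\alpha) > 1$ (and local integrability near the origin is immediate since $1-\alpha \in (0,1)$ there). Hence $\support_\alpha$ has finite length, i.e. $\descort[\alpha]{\pdf}$ has compact support. Here only the one-sided bound of the hypothesis is used.

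Next, for $\alpha > \alpha_c$ ($\gamma > 0$) the integral diverges and, by a standard integral-asymptotics argument (L'H\^opital, or Karamata's theorem for regularly varying integrands), $y(x) \sim \frac{C^{1-\alpha}}{\gamma}\,x^{\gamma} =: K\,x^\gamma$ as $x \to +\infty$. Inverting this monotone equivalence gives $x(y) \sim (y/K)^{1/\gamma}$, and substituting yields $\descort[\alpha]{\pdf}(y) = [\pdf(x(y))]^\alpha \sim C^\alpha\,[x(y)]^{-\eta\alpha} \sim C^\alpha K^{\eta\alpha/\gamma}\,y^{-\eta\alpha/\gamma}$. Since $\gamma = 1 + \eta(\alpha-1)$ this is exactly the claimed rate $\O\!\big(y^{-\eta\alpha/(1+\eta(\alpha-1))}\big)$. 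For the borderline $\alpha = \alpha_c$ ($\gamma = 0$) the same argument gives logarithmic divergence $y(x) \sim C^{1-\alpha}\,\log x$; inverting yields the exponential growth $x(y) \sim \exp(y/C^{1-\alpha})$, whence $\descort[\alpha]{\pdf}(y) \sim C^\alpha \exp(-\eta\alpha\,y/C^{1-\alpha})$, an exponential decay, while $y(x) \to +\infty$ shows the support is non-compact.

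The main obstacle is the rigorous bookkeeping needed to pass from the hypothesis, stated as a one-sided $\O(|x|^{-\eta})$ control, to a two-sided asymptotic for the inverse function $x(y)$: the tail decay of $\descort[\alpha]{\pdf}$ is read off by inverting $y(x)$, which is only justified if $y(x)$ itself admits a genuine asymptotic equivalence rather than a mere upper bound. I would make this precise by assuming, as the phrase ``decrease as a power law'' suggests, the matching lower bound, so that $[\pdf]^{1-\alpha}$ is regularly varying of index $\gamma-1$ and Karamata's theorem applies to $y(x)$; monotonicity of $y(\cdot)$ then transfers the equivalence cleanly to $x(y)$, and composition with the regularly varying $\pdf^\alpha$ closes the estimates. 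Only the compact-support case survives under the weaker one-sided hypothesis alone.
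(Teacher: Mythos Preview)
The paper does not actually prove this lemma; it is quoted from~\cite{Puertas19} and used as a black box throughout, so there is no ``paper's own proof'' to compare against. Your argument is the natural one and is correct: the trichotomy is governed by the exponent $\gamma=1+\eta(\alpha-1)$ of the antiderivative of $[\pdf]^{1-\alpha}$, and inverting the monotone map $y(x)$ transfers the asymptotics to $\descort[\alpha]{\pdf}$ exactly as you wrote.

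Two small remarks. First, your parenthetical ``local integrability near the origin is immediate since $1-\alpha\in(0,1)$'' is not quite right: for $\alpha<0$ one has $1-\alpha>1$, so what you really need is local boundedness (or at least local integrability of $[\pdf]^{1-\alpha}$), which is implicit in the well-posedness of the transform rather than a consequence of the exponent lying in $(0,1)$. Second, your closing paragraph is well taken: the statement as written uses the $\O$ symbol, but the precise power-law and exponential rates for $\alpha\geqslant\alpha_c$ do require a two-sided tail equivalence $\pdf(x)\asymp |x|^{-\eta}$ (or regular variation), not merely an upper bound. This is how the result is applied later in the paper (e.g.\ to the stretched Gaussians, which satisfy exact asymptotics), so your reading of the hypothesis is the intended one.
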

Note that in the limit $\alpha \to +\infty$ the tails of the distribution $\descort[\alpha]{\pdf}$ tend to decrease as $\O\left(|y|^{-1}\right),$ which is the limit case of a non-integrable tail. Note also that, clearly, the supremum 
\[
\sup \left\{ p > 0, \: \left\langle \left| x \right|^p \right \rangle_{\descort[\alpha]{\pdf}} < +\infty \right\},
\]
for $\alpha\in[\alpha_c,\infty)$, under the assumptions of Lemma~\ref{Lemma:tail}, is a decreasing function of  $\alpha$ since the weight of the tails is increasing with $\alpha$. Now, if we come back to the composition property~\ref{prop:composition}, it follows that the operator $\descort[\alpha]{}$ has the effect of increasing (resp. decreasing) the weight of the tails of a power-law density when $\alpha>1$ (resp. $\alpha<1$). It is worth mentioning that this behavior is  the opposite of that  of the \textit{usual} escort transformation,
\begin{equation}\label{eq:usual_escort}
\escort[\alpha]{\pdf}(x) = \frac{\left[ \pdf(x) \right]^\alpha}{\displaystyle \int_\Rset \left[ \pdf(t) \right]^\alpha dt},
\end{equation}
for  which the weight of the tails is reduced when $\alpha>1$, and increased when $\alpha\in(1/\beta,1)$, because the tail is crushed\footnote{Obviously, $\pdf(x) < 1$ for large values of $|x|$, so that  for large $|x|$, $\left[ \pdf(x) \right]^\alpha < \pdf(x)$ for any $\alpha>1.$} when $\alpha > 1$. The difference between the escort $\escort[\alpha]{}$ and the differential-escort transformation $\descort[\alpha]{}$ lies in the fact that the support of  $\escort[\alpha]{\pdf}$ remains always invariant, and the normalization is imposed by dividing the powered density by its $L^1-$norm. In contrast, in the differential-escort transformation, the normalization is carried out by imposing the conservation of the probability in any differential interval through the change of variables given in definition~\ref{definition:diff-esc}. Then, although the tails seem crushed when one applies the transformation $\descort[\alpha]{}, \: \alpha >1$, the length of each differential interval of the tails is increased to keep invariant the probability. This fact provokes the previously discussed increase of the tail weight when applying $\descort[\alpha]{}, \: \alpha>1$.

Another significant difference between usual and differential-escort transformations is the fact that, while the differential-escort transformations are well-defined for any $\alpha\in\Rset$, the escort distribution $\escort[\alpha]{\pdf}$ is only well-defined when $\displaystyle \int_\Rset \left[ \pdf(x) \right]^\alpha dx < \infty$, which is not necessarily satisfied for any $\alpha$. Just think about a power-law; indeed, the escort density $\escort[\alpha]{\pdf}$ of a power-law is a power-law whatever $\alpha$, provided that the power is integrable. Moreover, in such a case, an exponential behavior is never obtained (and \textit{all} power moments are finite only when $\alpha \to +\infty$). On the opposite, when $\alpha<1$, the weight of the tails of the differential-escort density $\descort[\alpha]{\pdf}$ is reduced up to achieve an exponentially decaying density for the critical parameter $\alpha_c$. Below, i.e., for $\alpha < \alpha_c$, the support of the differential-escort density becomes compact, the differential-escort density becoming uniform when $\alpha = 0$ on the unit length support\footnote{The support $(0,1)$ in the case $\alpha=0$ lies with the choice  $x_0=y_0=0$ in the definition~\ref{definition:diff-esc};  remember that another choice of $x_0$ and $y_0$ only implies a translation.} $(0,1)$. For completeness, let us mention that when $\alpha < 0,$ the support of $\descort[\alpha]{\pdf}$  is compact but the density diverges in its borders/adherence (again, under the assumptions of  Lemma~\ref{Lemma:tail} for $\pdf$).

% ========================= Cumulative moments ========================= %

\section{Cumulative moments}\label{Sec_CumMom}

% ------------------------- Definition cumulqtive moments

\subsection{Definition and general properties}

The informational quantifiers mentioned in section~\ref{Sec_Preliminary} are functionals of the probability density and its derivative, but there is no explicit dependence in the cumulative density function $\cdf$ or its complementary~$\ccdf$,
\begin{equation}\label{def:cum_func}
\cdf(x)=\int_{-\infty}^x \pdf(t)\,dt, \qquad  \ccdf(x) = 1 - \cdf(x) =\int_x^{\infty} \pdf(t)\,dt
\end{equation}
However, an entropy-like measure, called cumulative residual entropies defined as a functional solely of, $\cdf(x)$, was introduced for reliability engineering and image processing purposes in~\cite{Rao04} as
\begin{equation*}
\E[\pdf] = -\int_{\Rset} \ccdf(x)\log\ccdf (x)\,dx.
\end{equation*}
Nevertheless, as far as we know, definitions of functionals involving both the probability density function and the corresponding cumulative density function are not usual in the literature.

Following the line of~\cite{Rao04}, we introduce in this section a family of generalized moments $\mu_{p,\lambda}$, defined as the expected value of a deformed cumulative function involving a power of the probability density. As we will see, these moments are remarkably connected to usual moments thanks to a differential-escort transformation of the density. Let us first introduce the weighted cumulative function, defined through the usual escort density.
\begin{definition}[Weighted cumulative function]\label{def:cum_func}
Let $\gamma \in \Rset$, and a probability density $\pdf$. We define the  weighted cumulative function $\wcdf{\gamma}$ as
\begin{equation}\label{def:weighted_cum_func}
\wcdf{\gamma}(x)=\int_{-\infty}^x \left[ \pdf(t) \right]^{1-\gamma} \, dt.
\end{equation}
\end{definition}
\noindent The usual cumulative function is recovered when $\gamma = 0$, while
in the case $\gamma=1,$ $\wcdf{\gamma}(x)$ must be interpreted as the length of the support intersected with the interval  $(-\infty,x)$, possibly infinite.

In the sequel, only the quantity $\displaystyle \int_0^x[\pdf(t)]^{1-\gamma} dt \equiv \wcdf{\gamma}(x) -\wcdf{\gamma}(0)$, well-defined for any $\gamma \in \Rset$ and any bounded density $\pdf$, will be relevant.

Then, we define the $(p,\gamma)-$th generalized cumulative moment $\mu_{p,\gamma}[\pdf]$ of a density $\pdf$  as follows:
\begin{definition}[Generalized and generalized absolute cumulative moments and cumulative deviations]\label{definition:mu_plambda}
Given two real numbers $p > 0$, and $\gamma \in \Rset$, we define the generalized absolute cumulative moment of $(p,\gamma)-$th order as\footnote{Defining central moments mimicking usual ones is not a trivial task. The discussion of such an extension is postponed to the Appendix~\ref{app:central}.}
\begin{equation*}
\mu_{p,\gamma}[\pdf] = \int_\Rset \left| \int_0^x \left[ \pdf(t) \right]^{1-\gamma} dt \right|^p \pdf(x) \, dx = \left\langle \Big| \wcdf{\gamma}(x) - \wcdf{\gamma}(0) \Big|^p \right\rangle_{\! \pdf}.
\end{equation*}
Accordingly, the associated \textit{cumulative deviation} is defined as
\[
\sigma_{p,\gamma}[\pdf] = \mu_{p,\gamma}[\pdf]^{\frac1{p\gamma}}
\]
In the same line, when $p$ is a natural number, we define the generalized cumulative moments as
\begin{equation*}
\widetilde{\mu}_{p,\gamma}[\pdf] = \int_\Rset \left( \int_0^x \left[ \pdf(t) \right]^{1-\gamma} dt \right)^p \pdf(x) \, dx = \left\langle \Big( \wcdf{\gamma}(x) - \wcdf{\gamma}(0) \Big)^p \right\rangle_{\! \pdf}.
\end{equation*}
\end{definition}
\noindent Trivially the usual moments and absolute moments are recovered when $\gamma \to 1$, $\mu_{p,1}[\pdf]  = \mu_p[\pdf]$ \ $\widetilde{\mu}_{p,1}[\pdf] = \widetilde{\mu}_p[\pdf]$.

The absolute cumulative moments and deviations follow some basic properties with respect to their parameters, differential-escort transformation, and scaling, summarized below.
\begin{property}[Order relation]\label{prop:order_rel}
Given a probability density $\pdf$ and a real $\gamma$, function $p \mapsto \sigma_{p,\gamma}[\pdf]$ is increasing, that is, for any real numbers $0 < a < b$, 
\begin{equation*}
\sigma_{a,\gamma}[\pdf] \leqslant \sigma_{b,\gamma}[\pdf],
\end{equation*} 
\end{property}
\begin{proof}
Using Hölder inequality one easily obtains $\mu_{a,\gamma}[\pdf] \leqslant \left[ \mu_{ap,\gamma}[\pdf] \right]^{\frac1p}$ for any $p > 1$. We conclude by taking $b = p \, a$.
\end{proof}

As evoked, the absolute cumulative and usual moments are intimately linked through the differential-escort transformation:
\begin{property}[Cumulative moments and differential-escort transformation]\label{prop:mu_p-esc}
Given a probability density $\pdf$, and  two real numbers $\alpha \in \Rset$ and $p >0$,  the following identities are fulfilled
\begin{equation*}
\widetilde{\mu}_p[\descort[\alpha]{\pdf}] = \widetilde{\mu}_{p,\alpha}[\pdf]
\end{equation*}
and
\begin{equation*}
\mu_p\left[ \descort[\alpha]{\pdf} \right] = \mu_{p,\alpha}[\pdf], \qquad \sigma_p\left[ \descort[\alpha]{\pdf} \right] = \left[ \sigma_{p,\alpha}[\pdf] \right]^\alpha
\end{equation*} 
\end{property}
\begin{proof}
The proof trivially follows from definitions~\ref{definition:diff-esc} and~\ref{definition:mu_plambda}.
\end{proof}

On the other hand, as a consequence of the above property~\ref{prop:mu_p-esc} and the composition property~\ref{prop:composition}, one obtains a remarkable relation between cumulative moments and the $\alpha-$order differential-escort transform, in concordance with the Rényi and Fisher properties~\ref{prop:Renyi-esc} and~\ref{prop:Fisher-esc}:
\begin{property}[Stretching and cumulative moments]\label{prop:mu_plambda-esc}
Given three real numbers $p > 0$, $\alpha, \gamma \in \Rset$, the absolute cumulative moments and deviations of a density $\pdf$ satisfy
\begin{equation*}
\widetilde\mu_{p,\gamma}\left[ \descort[\alpha]{\pdf} \right] = \widetilde\mu_{p,\gamma\alpha}[\pdf], \qquad\mu_{p,\gamma}\left[ \descort[\alpha]{\pdf} \right] = \mu_{p,\gamma\alpha}[\pdf], \qquad \sigma_{p,\gamma}\left[ \descort[\alpha]{\pdf} \right] = \left[ \sigma_{p,\gamma\alpha}[\pdf] \right]^\alpha.
\end{equation*}
\end{property}
\begin{proof}
Given a couple of real numbers $\alpha, \gamma$, from the composition property~\ref{prop:composition} we have $\widetilde{\mu}_p\left[ \descort[\gamma \alpha]{\pdf} \right] = \widetilde{\mu}_p\left[ \descort[\gamma]{} \circ \descort[\alpha]{\pdf} \right]$. Now, using  property~\ref{prop:mu_p-esc}, we obtain $\widetilde{\mu}_p\left[ \descort[\gamma]{} \circ \descort[\alpha]{\pdf} \right] = \widetilde{\mu}_{p,\gamma}\left[ \descort[\alpha]{\pdf} \right]$, as well as, $\widetilde{\mu}_p\left[ \descort[\gamma \alpha]{\pdf} \right] = \widetilde{\mu}_{p,\gamma \alpha}[\pdf]$. So, necessarily, $\widetilde{\mu}_{p,\gamma}\left[ \descort[\alpha]{\pdf} \right] = \widetilde{\mu}_{p,\gamma\alpha}[\pdf]$, which closes the proof.  The proof for $\mu_{p,\lambda}$ follows step by step the same line.
\end{proof}

As for usual absolute moments, the absolute cumulative moments satisfy extended properties with respect to scaling changes. These are the consequences of properties~\ref{prop:scaling} and~\ref{prop:mu_p-esc}, summarized as follows:
\begin{property}[Scaling and absolute cumulative moments]\label{prop:scaling-mu_plambda}
Given two real number $p > 0$, and $\gamma \in \Rset^*$, and the scaling transformed density $\pdf_{(\kappa)}(x) = \kappa \, \pdf(\kappa x)$, $\kappa>0$, of any probability density $\pdf$, their absolute cumulative  deviations are related by
\begin{equation*}
\widetilde\mu_{p,\gamma}\left[ \pdf_{(\kappa)} \right] = \kappa^{-p\gamma} \, \widetilde\mu_{p,\gamma}[\pdf],\qquad\sigma_{p,\gamma}\left[ \pdf_{(\kappa)} \right] = \kappa^{-1} \sigma_{p,\gamma}[\pdf].
\end{equation*}
\end{property}
\begin{proof}
Given the real numbers $\alpha \in \Rset^*,\,p>0$ and $ \kappa > 0$, from property~\ref{prop:scaling}, and  using the relation $\sigma_p\left[ \pdf_{(\kappa)} \right] = \kappa^{-1}\sigma_p[\pdf]$, one has $\sigma_p\left[ \descort[\alpha]{\pdf_{(\kappa)}} \right] = \sigma_p\left[ \descort[\alpha]{\pdf}_{(\kappa^\alpha)} \right] = \kappa^{-\alpha} \sigma_p\left[ \descort[\alpha]{\pdf} \right].$ Now, using property~\ref{prop:mu_p-esc} in both sides of the  last equality one obtains $\left[ \sigma_{p,\alpha}[\pdf_{(\kappa)}] \right]^\alpha = \left[ \kappa^{-1} \sigma_{p,\alpha}[\pdf] \right]^\alpha,$ which ends the proof. The proof for $\widetilde \mu_{p,\lambda}$ follows the same line. 
\end{proof}

To conclude, let us highlight that Lemma~\ref{Lemma:tail} and property~\ref{prop:mu_p-esc} allows to prove the existence of a critical value $\gamma_c$ such that $\mu_{p,\gamma} < \infty$ for any $p \geqslant 0, \, \gamma \in [0 , \gamma_c]$:
\begin{lemma}[Finite values of the (absolute) cumulative moments for heavy-tailed distributions]\label{Lemma:mu_plambda-finite}
Given $\pdf(x)$ a bounded probability density which decays as a power-law, $\pdf(x) \sim |x|^{-\eta}$ when $|x| \to +\infty$, then for a fixed $\gamma \leqslant \gamma_c = \frac1{\eta^*},$ $$\mu_{p,\gamma}[\pdf] < \infty, \quad\forall p \geqslant 0.$$ As a consequence, also $$\widetilde{\mu}_{p,\gamma}[\pdf] < \infty, \quad\forall p \in \Nset.$$
\end{lemma}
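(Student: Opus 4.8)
The plan is to transfer the question to ordinary absolute moments through a differential-escort transform, and then to read off the tail of the transformed density from Lemma~\ref{Lemma:tail}. First I would apply property~\ref{prop:mu_p-esc} with $\alpha = \gamma$ to rewrite the cumulative moment as a genuine absolute moment,
\[
\mu_{p,\gamma}[\pdf] \; = \; \mu_p\!\left[ \descort[\gamma]{\pdf} \right] \; = \; \int_\Rset |y|^p \, \descort[\gamma]{\pdf}(y) \, dy,
\]
so that proving $\mu_{p,\gamma}[\pdf] < \infty$ for every $p \geqslant 0$ amounts to showing that \emph{all} absolute moments of the transformed density $\descort[\gamma]{\pdf}$ are finite.

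Since $\eta > 1$ one has $0 < \gamma_c = \frac1{\eta^*} = 1 - \frac1\eta < 1$, so the hypothesis $\gamma \leqslant \gamma_c$ puts us in the regime $\gamma < 1$ covered by Lemma~\ref{Lemma:tail} (applied with $\alpha = \gamma$). I would then split into two cases. If $\gamma < \gamma_c$, the lemma asserts that $\descort[\gamma]{\pdf}$ has a \emph{compact} support $\support_\gamma$; its finite length $\int_\Rset [\pdf(t)]^{1-\gamma}\,dt$ is guaranteed by property~\ref{prop:support_length}, the tail integrand decaying like $|x|^{-\eta(1-\gamma)}$ with $\eta(1-\gamma) > 1$ precisely because $\gamma < \gamma_c$. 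On a bounded support the weight $|y|^p$ is dominated by $M^p$ with $M = \esssup\{|y| : y \in \support_\gamma\} < \infty$, whence
\[
\mu_p\!\left[ \descort[\gamma]{\pdf} \right] \; \leqslant \; M^p \int_{\support_\gamma} \descort[\gamma]{\pdf}(y)\, dy \; = \; M^p \; < \; \infty,
\]
using only that $\descort[\gamma]{\pdf}$ is a probability density. A pleasant feature of this bound is that it requires no control of $\descort[\gamma]{\pdf}$ near the borders of its support, so it also covers transparently the subcase $\gamma < 0$, where the transformed density diverges at the endpoints but remains integrable.

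For the critical value $\gamma = \gamma_c$, Lemma~\ref{Lemma:tail} gives instead a non-compact support with exponential decay; since $\gamma_c > 0$ and $\pdf$ is bounded, $\descort[\gamma_c]{\pdf} = [\pdf(x(\cdot))]^{\gamma_c}$ is bounded (by $\|\pdf\|_\infty^{\gamma_c}$) away from the tails, so the integrand $|y|^p \descort[\gamma_c]{\pdf}(y)$ is eventually dominated by $|y|^p e^{-c|y|}$, which is integrable for every $p \geqslant 0$. This establishes $\mu_{p,\gamma}[\pdf] < \infty$ for all $p \geqslant 0$ in both regimes. The corresponding statement for the signed moments is then immediate: property~\ref{prop:mu_p-esc} gives $\widetilde\mu_{p,\gamma}[\pdf] = \widetilde\mu_p[\descort[\gamma]{\pdf}]$, and for $p \in \Nset$ the triangle inequality yields $\big|\widetilde\mu_{p,\gamma}[\pdf]\big| \leqslant \mu_{p,\gamma}[\pdf] < \infty$.

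The step I expect to require the most care is the critical case $\gamma = \gamma_c$: one must verify that the exponential decay provided by Lemma~\ref{Lemma:tail} really dominates every polynomial weight $|y|^p$ and that the central part of $\descort[\gamma_c]{\pdf}$ contributes nothing pathological, whereas the sub-critical regime is essentially automatic once one notices that $|y|^p$ is bounded on a compact support and the density integrates to one.
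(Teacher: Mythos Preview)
Your proof is correct and follows essentially the same route as the paper: rewrite $\mu_{p,\gamma}[\pdf]=\mu_p[\descort[\gamma]{\pdf}]$ via property~\ref{prop:mu_p-esc}, then invoke Lemma~\ref{Lemma:tail} to obtain either a compact support ($\gamma<\gamma_c$) or exponential tails ($\gamma=\gamma_c$), both of which force all absolute moments to be finite. Your version is simply more explicit, spelling out the bound $M^p$ on a compact support, the boundedness of the central part in the critical case, and the triangle inequality for the signed moments.
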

\begin{proof}
From Lemma~\ref{Lemma:tail} we have that $\descort[\gamma]{\pdf}$ has a compact support for all $\gamma < \gamma_c,$ so that $\mu_p\left[\descort[\gamma]{\pdf} \right] < \infty$ for all $p \geqslant0,$ and thus $\mu_{p,\gamma}[\pdf] < \infty$ for all $p \geqslant 0$. In the critical case, $\descort[\gamma_c]{\pdf}$ has an exponential decay as stated in Lemma~\ref{Lemma:tail}, leading to $\mu_p\left[ \descort[\gamma_c]{\pdf} \right] = \mu_{p,\gamma_c}[\pdf] < \infty,$ for all $p\geqslant0.$
\end{proof}

% ------------------------- lambda = 0

\subsection{The case $\gamma = 0$}\label{ssec:lambda=0}
Let us here specifically analyze the case $\gamma = 0$. The first relevant observation is that  $\mu_{q,0}[\pdf]$ is completely determined by the \textit{usual} cumulative function evaluated in zero\footnote{The point where the cumulative function is evaluated depends on the definition of cumulative moments, which is linked with the choice of the parameters $x_0 = y_0 = 0$.}.  By definition
\begin{equation*}
\mu_{p,0}[\pdf] = \int_\Rset \left| \int_0^x \pdf(t) \, dt \right|^p \pdf(x) \, dx = \int_\Rset \left| \cdf(x) - \cdf(0) \right|^p \cdf'(x) \, dx.
\end{equation*}
After simple algebraic manipulations, one obtains:
\begin{equation*}
\mu_{p,0}[\pdf] = \frac{\left[ \ccdf(0) \right]^{p+1} + \left[ \cdf(0) \right]^{p+1}}{p+1}.
\end{equation*}

Let us here compute the cumulative moments for the exponential $\pdf_1(x) = \minpdf{1}{+\infty}{}(x) \propto e^{-x} \one_{\Rset_+}(x)$ and $q-$exponential densities $\pdf_q(x) = \minpdf{2-q}{+\infty}{}(x) \propto \exp_q(-x) \one_{\Rset_+}$ where $\exp_q$ is defined Eq.~\eqref{def:q-exp}.

First, let us mention the apparently poorly known fact that $q-$exponential densities (with a scaling operation) can be characterized as the differential-escort of the exponential one~\cite{Puertas19},
\begin{equation}\label{eq:e-e_q}
\descort[\alpha]{\pdf_1} = \left( \pdf_q \right)_{(\alpha)} \qquad \mbox{with} \quad q = 2 - \frac1\alpha.
\end{equation}
Insofar, it is important to remark that the differential-escort transformation of a Gaussian density is not a $q-$Gaussian one, but involves some special functions as discussed in reference~\cite{Zozor17} and as will be recalled later in the paper. In fact, as advanced in the introduction, the transformation $\descort[\alpha]{}$ of the generalized Gaussian distribution $\gauss_{p,\lambda}$ is connected to generalized trigonometric functions. This will be discussed in more detail in the next Section~\ref{Sec_densities}.

% ------------------------- pdf reconstruction 

\subsection{Reconstruction of the probability density via cumulative moments}\label{ssec:cauchy}

Beyond the study of the cumulative moments in the informational-theoretic framework, these generalized moments can find application in terms of probability density reconstruction. Under certain conditions, a series of standard moments allows to formally reconstruct a probability density~\cite{Shohat1943}. More precisely, a numerable sequence $\left\{ m_i \right\}_{i=1}^\infty$ is said to be a moment-sequence when there exists a probability density $\pdf$ with $\widetilde{\mu}_i[\pdf] = m_i$. In addition, it is known that when a moment-sequence satisfies the so-called Carleman's condition~\footnote{Carleman's condition is satisfied when $\displaystyle \sum_{i=1}^\infty \frac1{\sqrt[2i]{m_{2i}}} = \infty.$} then the sequence $\left\{ m_i \right\}_{i=1}^\infty$ uniquely characterizes the probability density $\pdf,$ or equivalently, the associated Hamburger problem is determinated~\cite{Akhiezer_Moment_Problem}. Let us point out that Carleman's condition is a sufficient condition. As far as we know, unicity sufficient and necessary conditions  it still an open problem.

When a probability density is heavy-tailed, it does not admit a numerable set of well-defined moments. An approach using the moments of the usual escort density Eq.~\eqref{eq:usual_escort} has been studied by Tsallis et al.~\cite{Tsallis09} to generalized the density reconstruction from the ``moments''. However, they conclude that the moments of the usual escort transformation do not allow the recovery of a heavy-tailed density satisfactorily. More precisely, they argue that as far as the so-called $q-$Fourier transform has not a unique inverse function, then a probability density $\pdf$ cannot be fully characterized by the set of the moments of its escort transform (given in Eq.~\eqref{eq:usual_escort}). As an alternative, we propose to have recourse to cumulative moments to characterize such heavy-tailed densities. More precisely, the set $\left\{ \{\gamma\} \, , \, \{m_i\}_{i=1}^\infty \right\}$  with $\gamma \neq 0$  characterizes a probability density $\pdfcm$, where the role of the parameter $\gamma$ is to ``regulate'' weight of the tail. In particular, the cumulative moments of $\pdfcm$ are given by $\widetilde\mu_{\gamma,i}[\pdfcm] = m_i$,  while not necessarily all its standard moments are finite.
\begin{theorem}\label{th:heavy-Hauss}
Let $\{m_i\}_{i=1}^\infty$ be a moment-sequence which uniquely characterizes a probability density $\pdf$. Then, for any $\gamma \neq 0$, there exists a probability density $\pdfcm$ (depending on $\gamma$) such that $\mu_{\gamma , i}[\pdfcm] = m_i.$ Moreover, $\left\{ \{\gamma\} \, , \, \{m_i\}_{i=1}^\infty \right\}$ uniquely characterizes $\pdfcm$, which is given by $\pdfcm = \descort[\gamma^{-1}]{\pdf}$.
\end{theorem}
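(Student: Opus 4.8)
The plan is to exhibit the reconstructing density explicitly as a differential-escort transform, and then to transfer both the existence of the prescribed cumulative moments and the uniqueness of $\pdfcm$ from the classical moment problem for $\pdf$, using the composition algebra of Property~\ref{prop:composition}. First I would set $\pdfcm = \descort[\gamma^{-1}]{\pdf}$, which is legitimate because $\gamma \neq 0$ makes $\gamma^{-1}$ a finite nonzero real: the integral defining the change of variables in Definition~\ref{definition:diff-esc} converges locally for a (bounded) density, so $\pdfcm$ is a genuine probability density, automatically normalised by the probability-conservation built into the transform. Using the inversion statement of Property~\ref{prop:composition}, I would then record the key identity $\descort[\gamma]{\pdfcm} = \descort[\gamma]{} \circ \descort[\gamma^{-1}]{\pdf} = \descort[1]{\pdf} = \pdf$.

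For existence, I would apply Property~\ref{prop:mu_p-esc} with $\alpha = \gamma$ and power $p = i$, giving $\widetilde{\mu}_i\!\left[\descort[\gamma]{\pdfcm}\right] = \widetilde{\mu}_{i,\gamma}[\pdfcm]$. Combined with the identity $\descort[\gamma]{\pdfcm} = \pdf$ and the hypothesis $\widetilde{\mu}_i[\pdf] = m_i$, this yields $\widetilde{\mu}_{i,\gamma}[\pdfcm] = m_i$ for every $i$, so the cumulative-moment sequence of $\pdfcm$ reproduces $\{m_i\}$.

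For uniqueness, I would read the same identity in reverse. If some density $\pdfcm'$ satisfies $\widetilde{\mu}_{i,\gamma}[\pdfcm'] = m_i$ for all $i$, then Property~\ref{prop:mu_p-esc} gives $\widetilde{\mu}_i\!\left[\descort[\gamma]{\pdfcm'}\right] = m_i$, i.e. $\descort[\gamma]{\pdfcm'}$ is a probability density whose ordinary moment sequence is $\{m_i\}$. Since $\{m_i\}$ uniquely characterises $\pdf$ by assumption, necessarily $\descort[\gamma]{\pdfcm'} = \pdf$; applying the inverse transform $\descort[\gamma^{-1}]{}$ (again available because $\gamma \neq 0$) gives $\pdfcm' = \descort[\gamma^{-1}]{\pdf} = \pdfcm$.

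I expect the only genuine subtlety, rather than the moment bookkeeping which is purely formal once Properties~\ref{prop:composition} and~\ref{prop:mu_p-esc} are invoked, to be the well-definedness of the reconstruction map. Concretely, one must check that $\descort[\gamma^{-1}]{\pdf}$ exists as a bona fide probability density, i.e. that the change of variables of Definition~\ref{definition:diff-esc} is a well-defined bijection onto its image, and that $\pdfcm \mapsto \descort[\gamma]{\pdfcm}$ is injective on the admissible class, so that the implication $\descort[\gamma]{\pdfcm'} = \pdf \Rightarrow \pdfcm' = \pdfcm$ is licit; both follow from the abelian group structure noted after Property~\ref{prop:composition}. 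A secondary point worth flagging is that it is the signed version $\widetilde{\mu}$ that is relevant to the Hamburger-type uniqueness used here, the absolute-moment version being entirely analogous.
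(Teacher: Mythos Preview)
Your proposal is correct and follows essentially the same route as the paper: both arguments set $\pdfcm = \descort[\gamma^{-1}]{\pdf}$ and invoke the group structure of Property~\ref{prop:composition} together with Property~\ref{prop:mu_p-esc} to obtain $m_i = \widetilde\mu_i\bigl[\descort[\gamma]{}\circ\descort[\gamma^{-1}]{\pdf}\bigr] = \widetilde\mu_{i,\gamma}[\pdfcm]$. You simply spell out the uniqueness direction and the well-definedness of the transform more explicitly than the paper, which compresses the whole argument into a single line.
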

\begin{proof}
The result is an immediate consequence of the group structure of differential-escort transformations given Property~\ref{prop:composition} and Property~\ref{prop:mu_p-esc} that give $
m_i = \widetilde\mu_i\left[ \descort[\gamma]{} \circ \descort[\gamma^{-1}]{\pdf} \right] = \widetilde\mu_{\gamma,i} \left[ \descort[\gamma^{-1}]{\pdf} \right]$
\end{proof}
This result not only provides a richer characterization of a probability density but also allows to characterize heavy-tailed probability density functions. Indeed,  by Lemma~\ref{Lemma:mu_plambda-finite}, for any bounded probability density with an algebraic decay %\Ver{((tails?)) [creo que se entiende bien así, sin "tails", pero como prefieras]}
there exists a critical value $\gamma_c$  such that the set $\{\widetilde\mu_{\gamma,i}[\pdf]\}_{i=1}^\infty$ is well defined for any $\gamma \leqslant \gamma_c$.

Let us apply this result to the Cauchy probability density  for which the critical parameter in Lemma~\ref{Lemma:mu_plambda-finite} is given by $\gamma_c = \frac12.$ By definition,
\begin{equation}
\widetilde\mu_{\frac12 , 2i}[\gauss_{2,0}] = \frac1{\pi^{1+i}} \int_\Rset \left( \int_0^x \frac1{\sqrt{1+x^2}} \right)^{2i} \frac{dx}{1+x^2} = \frac2{\pi^{1+i}} \int_0^{+\infty} \frac{\arsinh^{2i}(x)}{1+x^2}\,dx,
\end{equation}
which correspond to arcsinh-like moments. After the successive changes of variable $x = \sinh s$ and $z = e^s$, and from~\cite[Eq.~4.271-6]{Gradshteyn15}, we obtain for any natural values of $i$
\begin{equation}\label{eq:Euler_numbers}
\widetilde\mu_{\frac12 , 2i}[\gauss_{2,0}] = \frac4{\pi^{1 + i}} \int_{1}^{+\infty} \, \frac{\log^{2i}(z)}{1+z^2}\,dz = \left(\frac{\pi}{4}\right)^i \left| E_{2i} \right|, 
\end{equation}
where $E_{2i}$ denotes with the Euler numbers~\cite{Gradshteyn15}. The odd order cumulative moments are trivially zero by parity, i.e., $\widetilde\mu_{\frac12 , 2i+1}[\gauss_{2,0}]=0,$ from where the following expression is fulfilled $$\widetilde\mu_{\frac12,i}[\gauss_{2,0}]=\left(\frac{\sqrt{\pi}}{2}\right)^i|E_i|$$ 
%\\
%\Ver{Falta $i$ impar... tienes razón! perdona...}
%\\
%}
A simple study allows to verify that the sequence $\left\{ \left(\frac{\sqrt{\pi}}{2}\right)^i|E_i|\right\}_{i=1}^{+\infty}$ satisfies the Carleman's condition. Hence, Theorem~\ref{th:heavy-Hauss} allows to insure that the Cauchy probability density is uniquely characterized by the sequence $\{\widetilde\mu_{\frac12 , i}\}_{i = 1}^\infty.$ More precisely, $\widetilde\mu_{\frac12 , i}[\gauss_{2,0}] = \widetilde\mu_{i}[\descort[\frac12]{\gauss_{2,0}}] = \widetilde\mu_{i}[\gauss_{2,0,-1}],$ where $\gauss_{2,0,-1}$ corresponds to the hyperbolic secant probability density, which has an exponential decay.
In summary, the series $\left\{ \{1\} \, , \,\left\{ \left(\frac{\sqrt{\pi}}{2}\right)^i|E_i| \right\}_{i = 1}^\infty \right\}$ uniquely characterizes the hyperbolic secant density, and the series $\left\{ \left\{ \frac12 \right\} \, , \, \left\{ \left(\frac{\sqrt{\pi}}{2}\right)^i|E_i| \right\}_{i = 1}^\infty \right\}$ uniquely characterizes the Cauchy probability density. 

As mentioned in the introduction of the section, the reconstruction problem of a probability density from the sequence of moments is ill-posed, while the MaxEnt principle is a good alternative to approximate the density with a fixed finite number of moments. Let us also recall the inadequacy of this approach dealing with heavy-tailed distributions. As claimed, theorem~\ref{th:heavy-Hauss} precisely provides the framework to extend the MaxEnt approach, dealing with cumulative moments.
%\sout{An alternative approach is to fix cumulative moments, in the line of theorem~\ref{th:heavy-Hauss}}.
In the following theorem we enlight the equivalence between maximizing Shannon or Rényi entropies and fixing both standard or cumulative moments.
\begin{theorem}\label{th:Cumlative-MaxEnt}
Let $f$ be the probability density which maximizes the Shannon entropy with the first $N$ moments $\widetilde\mu_i[f] = m_i, \; i = 1 , 2 , \ldots , N$ fixed, and let $\gamma>0$ be a real number. It turns that $\pdfcm = \descort[\gamma^{-1}]{f}$ maximizes the Shannon entropy with the first $N$ cumulative moments $\widetilde\mu_{i,\gamma}[\pdfcm] = m_i , \; i = 1 , 2 , \ldots , N$ fixed. Moreover, if $\pdf_\lambda$ is the density maximizing the $\lambda-$Rényi entropy with the first $N$ moments $\widetilde\mu_i[\pdf_\lambda] = m_i, \; i = 1 , 2 , \ldots , N$ fixed,  then $\pdfcm_\lambda = \descort[\gamma^{-1}]{\pdf_\lambda}$ 
is the probability density maximizing the $\lambda_\gamma-$Rényi entropy with the first $N$ cumulative moments $\widetilde\mu_{i,\gamma}[\pdfcm_\lambda] = m_i, \; i = 1 , 2 , \ldots , N$  fixed, and with $\lambda_\gamma = 1 + \gamma \, (\lambda-1)$.
\end{theorem}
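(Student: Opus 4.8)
The plan is to read the whole statement as a transport of variational problems along the differential-escort map. The map $\descort[\gamma]{}$ is a bijection on probability densities (Property~\ref{prop:composition}), it turns cumulative moments into ordinary moments (Property~\ref{prop:mu_p-esc}), and it rescales entropies in a strictly monotone fashion when $\gamma>0$ (Property~\ref{prop:Renyi-esc}). Since all three ingredients are order-isomorphisms for $\gamma>0$, the maximizer of the moment-constrained problem is carried exactly onto the maximizer of the cumulative-moment-constrained problem, and there is essentially nothing to optimize from scratch.

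First I would fix the two feasible sets $\mathcal{C}=\{g:\widetilde\mu_i[g]=m_i,\ i=1,\dots,N\}$ and $\mathcal{C}_\gamma=\{\pdf:\widetilde\mu_{i,\gamma}[\pdf]=m_i,\ i=1,\dots,N\}$. Property~\ref{prop:mu_p-esc} with $\alpha=\gamma$ and $p=i$ gives $\widetilde\mu_{i,\gamma}[\pdf]=\widetilde\mu_i[\descort[\gamma]{\pdf}]$, so $\pdf\in\mathcal{C}_\gamma \Leftrightarrow \descort[\gamma]{\pdf}\in\mathcal{C}$; because $\descort[\gamma]{}$ is invertible with inverse $\descort[\gamma^{-1}]{}$ (Property~\ref{prop:composition}), the map $\pdf\mapsto\descort[\gamma]{\pdf}$ is a bijection from $\mathcal{C}_\gamma$ onto $\mathcal{C}$. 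In particular the candidate $\pdfcm=\descort[\gamma^{-1}]{f}$ is feasible, since $\widetilde\mu_{i,\gamma}[\pdfcm]=\widetilde\mu_i[\descort[\gamma]{\pdfcm}]=\widetilde\mu_i[f]=m_i$ using $\descort[\gamma]{}\circ\descort[\gamma^{-1}]{}=\descort[1]{}$.

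For the Shannon case, Property~\ref{prop:Renyi-esc} at $\lambda=1$ gives $N[\descort[\gamma]{\pdf}]=(N[\pdf])^{\gamma}$, hence $S[\descort[\gamma]{\pdf}]=\gamma\,S[\pdf]$. Thus for any $\pdf\in\mathcal{C}_\gamma$, writing $g=\descort[\gamma]{\pdf}\in\mathcal{C}$, optimality of $f$ on $\mathcal{C}$ yields $\gamma\,S[\pdf]=S[g]\leq S[f]=\gamma\,S[\pdfcm]$; dividing by $\gamma>0$ gives $S[\pdf]\leq S[\pdfcm]$, which is the assertion.

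For the Rényi case the only additional work is the index bookkeeping. Applying Property~\ref{prop:Renyi-esc} to $\pdf=\descort[\gamma^{-1}]{g}$ evaluated at order $\lambda_\gamma=1+\gamma(\lambda-1)$, the order appearing on the right-hand side is $1+\gamma^{-1}(\lambda_\gamma-1)=\lambda$, so $N_{\lambda_\gamma}[\pdf]=(N_\lambda[g])^{1/\gamma}$, i.e. $R_{\lambda_\gamma}[\pdf]=\tfrac1\gamma R_\lambda[g]$. The same monotonicity argument (again using $\gamma>0$ and optimality of $\pdf_\lambda$ on $\mathcal{C}$) then shows $\pdfcm_\lambda=\descort[\gamma^{-1}]{\pdf_\lambda}$ maximizes $R_{\lambda_\gamma}$ over $\mathcal{C}_\gamma$. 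The main, and really the only delicate, point is this index composition: one must check that conjugating the entropy order by $\descort[\gamma^{-1}]{}$ sends $\lambda_\gamma$ back to $\lambda$, which is exactly the statement that $\lambda\mapsto 1+\gamma(\lambda-1)$ and $\mu\mapsto 1+\gamma^{-1}(\mu-1)$ are inverse affine maps; everything else is a monotone change of variables that preserves the argmax.
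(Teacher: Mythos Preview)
Your proof is correct and follows exactly the route the paper indicates: the paper's own proof is the single sentence ``The proof follows from Properties~\ref{prop:composition}, \ref{prop:Renyi-esc} and~\ref{prop:mu_p-esc},'' and what you have written is precisely the explicit unpacking of how those three properties combine (bijection of feasible sets via $\descort[\gamma]{}$, the relation $R_{\lambda_\gamma}[\pdf]=\gamma^{-1}R_\lambda[\descort[\gamma]{\pdf}]$, and the monotone transport of the argmax for $\gamma>0$). There is nothing to add or correct.
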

\begin{proof}
The proof follows from Properties~\ref{prop:composition}, \ref{prop:Renyi-esc} and \ref{prop:mu_p-esc}.
\end{proof}

% ============================ GTD ============================ %

\section{Generalized trigonometric densities}\label{Sec_densities}

In this section, we will study the minimizing densities $\minpdf{p}{\beta}{\lambda}$ of the generalized Stam inequality~\eqref{ineq:trip_Stam} and will relate them with the generalized trigonometric functions. In addition, we will compute some of their cumulative moments.

% -------------------- stretched gaussian and gene cosine

\subsection{The stretched-Gaussian and the generalized cosine} \label{sec_trig}

Let us first highlight the relation between stretched Gaussian densities and generalized trigonometric functions through the $\alpha-$order differential-escort transformation. This is summarized in the following lemma:
\begin{lemma}[Generalized cosine and stretched Gaussian densities]\label{Lemma:GTDs&gplambda}
Consider the stretched Gaussian density $\gauss_{p,\lambda}: \support_{p,\lambda} \longrightarrow \Rset_+, p^*>0, \; \lambda > 1 -p^*, \; \lambda \neq 1,$ and a real number $\alpha \neq 1$. Then, the differential-escort density of the stretched Gaussian, $\descort[\alpha]{\gauss_{p,\lambda}}: (\support_{p,\lambda})_\alpha \longrightarrow \Rset_+$ is given by
\begin{equation}\label{eq:gplalpha}
\descort[\alpha]{\gauss_{p,\lambda}}(y) = \left\{\begin{array}{lll}
a_{p,\lambda}^\alpha \, \left[ \cosh_{\frac{1-\lambda}{1-\alpha},p^*}\left(  k_{p,\lambda,\alpha} y \right) \right]_+^{\frac{\alpha}{\alpha-1}} & \mbox{if} &  \lambda < 1,
\\[4mm]
a_{p,\lambda}^\alpha \, \left[ \cos_{\frac{\lambda-1}{\alpha-1},p^*}\left(  k_{p,\lambda,\alpha} y \right) \right]_+^{\frac{\alpha}{\alpha-1}} & \mbox{if} & \lambda > 1.
\end{array}\right.
\end{equation}
where $k_{p,\lambda,\alpha} = \frac{|1-\lambda|^\frac1{p^*}}{a_{p,\lambda}^{1-\alpha}},$ and where $a_{p,\lambda}$ is given in Eq.~\eqref{def_aplambda}. The support of this density is given by
\begin{equation*}
(\support_{p,\lambda})_\alpha = \left( - \, \frac{[a_{p,\lambda}]^{1-\alpha}\,\pi_{\xi,p^*}}{2\,|1-\lambda|^\frac{1}{p^*}} \: , \: \frac{[a_{p,\lambda}]^{1-\alpha}\,\pi_{\xi,p^*}}{2\,|1-\lambda|^\frac{1}{p^*}} \right)
\end{equation*}
with
\begin{equation*}
\xi = \left(\frac{\alpha-1}{|1-\lambda|}+\frac{p^*+1}{p^*}\cdot\one_{\Rset_+}(1-\lambda)\right)^{-1}.
\end{equation*}
\end{lemma}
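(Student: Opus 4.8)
The plan is to apply the differential-escort transformation directly to the explicit form of $\gauss_{p,\lambda}$ given in Eq.~\eqref{def:g_plambda} and to recognize the resulting expression as a generalized (hyperbolic) cosine by matching it against the Pythagorean-type identities~\eqref{eq:pyth_rel} and~\eqref{eq:pyth_rel_hyperbolic}. First I would compute the change of variables $y(x)$ from Definition~\ref{definition:diff-esc}, namely $y(x) = \int_0^x [\gauss_{p,\lambda}(t)]^{1-\alpha}\,dt$. Since $\gauss_{p,\lambda}(t) = a_{p,\lambda}\,\exp_{2-\lambda}(-|t|^{p^*})$, raising to the power $1-\alpha$ keeps the same deformed-exponential structure with a modified exponent, so the integrand becomes $a_{p,\lambda}^{1-\alpha}\,[\exp_{2-\lambda}(-|t|^{p^*})]^{1-\alpha}$. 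The key observation is that $[\exp_\mu(z)]^{1-\alpha}$ is again a deformed exponential of the form $\exp_{\mu'}(\cdot)$ up to rescaling the argument, which lets me carry out the integral defining $y(x)$ against the generalized-$\arcsin$/$\arsinh$ integrals~\eqref{def:arcsinpq} and~\eqref{def:arcsinhpq}.

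Next I would separate the two cases by the sign of $\lambda-1$, since the support of $\gauss_{p,\lambda}$ is compact for $\lambda>1$ and all of $\Rset$ for $\lambda<1$, and correspondingly the relevant inverse function is the (bounded) generalized sine in one case and the (unbounded) generalized hyperbolic sine in the other. Concretely, after substituting $s = |t|^{p^*}/(\lambda-1)$ (or its $\lambda<1$ analogue), the integral $y(x)$ is proportional to $\arcsin_{v,w}$ or $\arsinh_{v,w}$ evaluated at an appropriate power of the stretched-Gaussian value, with the second parameter $w = p^*$ and the first parameter determined by the exponents. Inverting this relation expresses the stretched-Gaussian value $\gauss_{p,\lambda}(x(y))$ in terms of $\cos_{v,p^*}$ or $\cosh_{v,p^*}$ of a rescaled $y$, via the Pythagorean identity that links $\cos_{v,w}$ to $(1-|\sin_{v,w}|^w)^{1/v}$. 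Raising the result to the power $\alpha$, as required by $\descort[\alpha]{\gauss_{p,\lambda}}(y) = [\gauss_{p,\lambda}(x(y))]^\alpha$, yields the exponent $\frac{\alpha}{\alpha-1}$ in Eq.~\eqref{eq:gplalpha}; I would then read off the first parameter of the generalized cosine as $\frac{1-\lambda}{1-\alpha}$ (resp.\ $\frac{\lambda-1}{\alpha-1}$) and the scaling constant $k_{p,\lambda,\alpha}$ from the accumulated prefactors.

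To pin down the support $(\support_{p,\lambda})_\alpha$ I would invoke Property~\ref{prop:support_length}, which gives its total length as $\int_{\support_{p,\lambda}} [\gauss_{p,\lambda}(t)]^{1-\alpha}\,dt$; this integral is again a Beta-type integral and, using the definition of $\pi_{v,w}$ in Eq.~\eqref{def:pipq}, produces the stated $\pi_{\xi,p^*}$ with $\xi$ as given, after carefully tracking which of the two branches of~\eqref{def:pipq} applies. The main obstacle I anticipate is bookkeeping: correctly propagating the normalization constant $a_{p,\lambda}$ through the power $1-\alpha$, matching the hypergeometric/Beta constants so that $k_{p,\lambda,\alpha}$ comes out exactly as $|1-\lambda|^{1/p^*} a_{p,\lambda}^{\alpha-1}$, and verifying that the first parameter $\xi$ of the generalized cosine is consistent between the density formula~\eqref{eq:gplalpha} and the support formula via the duality relation~\eqref{eq:duality_param}. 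A secondary subtlety is justifying the $(\cdot)_+$ truncation and the behavior at the support boundary, where $\cos_{v,p^*}$ vanishes; this requires checking that the change of variables remains a valid bijection up to the endpoints, which follows from monotonicity of the $\arcsin_{v,w}/\arsinh_{v,w}$ integrals.
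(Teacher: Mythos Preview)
Your plan is correct and follows essentially the same route as the paper: compute $y(x)=\int_0^x[\gauss_{p,\lambda}(t)]^{1-\alpha}\,dt$, recognize the result as a rescaled $\arcsin_{v,p^*}$ or $\arsinh_{v,p^*}$ (with $v=\frac{1-\lambda}{1-\alpha}$), invert, and then apply the Pythagorean identities~\eqref{eq:pyth_rel}--\eqref{eq:pyth_rel_hyperbolic} to turn $[\gauss_{p,\lambda}(x(y))]^\alpha$ into the stated power of $\cos_{v,p^*}$ or $\cosh_{v,p^*}$. One small wording slip: the $\arcsin_{v,p^*}/\arsinh_{v,p^*}$ is evaluated at a linear rescaling of $x$ (namely $|1-\lambda|^{1/p^*}x$), not at ``a power of the stretched-Gaussian value''; and for the support the paper simply evaluates $y$ at the endpoints of $\support_{p,\lambda}$ (which is of course equivalent to your use of Property~\ref{prop:support_length}).
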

\begin{proof}
The proof essentially follows from definition~\ref{definition:diff-esc}, Eq.~\eqref{def:g_plambda}, and the pythagorean-like relations given by Eqs~\eqref{eq:pyth_rel}-\eqref{eq:pyth_rel_hyperbolic}. For more details see Appendix~\ref{App1}.
\end{proof}
\noindent Note that in the case $\alpha = 1$ the generalized Gaussian densities are trivially recovered from property~\ref{prop:composition}.

%...........some remarks...................

Let us mention some remarks concerning Lemma~\ref{Lemma:GTDs&gplambda}.
\begin{remark}\label{remark_integrability_1}
The integrability conditions of the deformed probability density $\descort[\alpha]{\gauss_{p,\lambda}}$ are inherited from those of the stretched Gaussian $\gauss_{p,\lambda},$ i.e. $p^* > 0$ and $\lambda > 1 - p^*$. This is because, as discussed in section~\ref{ssec:definition:diff-esc}, the differential-escort transform $\descort[\alpha]{\pdf}$ of any  probability density $\pdf$ is also a probability density (with $L_1-$norm unity) whatever $\alpha \in \Rset.$   
\end{remark}
\begin{remark}\label{remark_compact_support_1}
By virtue of Lemma~\ref{Lemma:tail} and the definition of stretched Gaussian $\gauss_{p,\lambda}$ Eq.~\eqref{def:g_plambda}, one finds, for $\lambda < 1$, that the deformed probability density $\descort[\alpha]{\gauss_{p,\lambda}}$ has a compact support for any $\alpha < 1 + \frac{ \lambda - 1 }{ p^* }.$ At the opposite, i.e., for $\alpha \geqslant 1 + \frac{ \lambda - 1 }{ p^* }$, its support is the whole real set and $\descort[\alpha]{\gauss_{p,\lambda}}$ has an exponential decay for $\alpha = 1 + \frac{ \lambda - 1 }{ p^* },$ and power-law decay for $\alpha > 1 + \frac{ \lambda - 1 }{ p^* }.$ In the case $\lambda > 1,$ using property~\ref{prop:support_length}, one finds that the support of $\descort[\alpha]{\gauss_{p,\lambda}}$ is compact for any $\alpha < \lambda$ and the whole real set for $\alpha \geqslant \lambda$ where, for $\alpha = \lambda$ the deformed density $\descort[\alpha]{\gauss_{p,\lambda}}$ has an exponential decay,  and a power-law decay for $\alpha > \lambda.$
\end{remark}
\begin{remark}\label{remark_bounded_1}
Note that, since any stretched Gaussian density is bounded, any of its differential-escort density $\descort[\alpha]{\gauss_{p,\lambda}}$ for $\alpha \geqslant 0$ remains bounded. In the opposite, when $\alpha <0$,  $\descort[\alpha]{\gauss_{p,\lambda}}$ is no longer bounded, but divergent in the borders of its compact support. This fact is easy to check from definition~\ref{definition:diff-esc}, and the definition of the stretched Gaussian densities Eq.~\eqref{def:g_plambda}.
\end{remark}

The above lemma motivates the definition of what we name \textit{generalized trigonometric densities} (GTDs).
\begin{definition}[Generalized trigonometric densities (GTDs)]\label{definition:GTDs}
Given real numbers $p^* > 0$, and $\lambda, \beta$ such that $1-\lambda+\beta \neq 0$ and $\sign(1-\lambda+\beta) = \sign\left(\frac{1-\lambda}p+\beta\right)$ with $\sign$ the sign function,  $\lambda \neq 1, \lambda \neq \beta$, we define the \textit{ generalized trigonometric density} $\gtd{p}{\beta}{\lambda}: \support_{p,\beta,\lambda} \longrightarrow \Rset_+$ as follows:
\begin{equation}\label{def:GTDs}
\gtd{p}{\beta}{\lambda}(y) = \left\{\begin{array}{lll}
a_{p,\beta,\lambda} \, \left[ \cosh_{\frac{1-\lambda}{\beta - \lambda},p^*}\left( \kappa_{p,\beta,\lambda} \,y \right) \right]^\frac1{\lambda-\beta} & \mbox{if} & \lambda<1,\\[4mm]
a_{p,\beta,\lambda}\, \left[ \cos_{\frac{\lambda-1}{\lambda-\beta},p^*} \left( \kappa_{p,\beta,\lambda} \, y \right) \right]^\frac1{\lambda-\beta} & \mbox{if} & \lambda>1,
\end{array}\right.
\end{equation}
where
\begin{equation}\label{def:TDs_K}
\kappa_{p, \beta,\lambda} = \left| \frac{\lambda-1}{1+\beta-\lambda} \right|^\frac1{p^*},
\end{equation}
and the normalization constant is given by
\begin{equation}\label{def:a_pbetalambda}
a_{p,\beta, \lambda} = a_{p,\frac{\beta}{1+\beta-\lambda}} = \left\{ \begin{array}{lll}
\displaystyle  \frac{p^* \; \kappa_{p,\beta,\lambda}}{2 \, B\left( \frac1{p^*} , \frac{\beta \, \sign(1 + \beta - \lambda)}{|1 - \lambda|} + \frac1p \, \one_{\Rset_+} \left( \frac{1 - \lambda}{1 + \beta - \lambda} \right) \right)} & \mbox{if}  & \lambda \neq 1,\\[2mm]
\frac{p^*}{2\Gamma(\frac1{p^*})}, & \text{if} &  \lambda = 1,
\end{array}\right.
\end{equation}
The support $\support_{p,\beta,\lambda}$ of this density is  expressed as
\begin{equation*}
\support_{p,\beta,\lambda} = \left( - \frac{\pi_{\nu,p^*}}{2 \kappa_{p,\beta,\lambda}} \: , \: \frac{\pi_{\nu,p^*}}{2 \kappa_{p,\beta,\lambda}} \right)
\end{equation*}
with
\begin{equation*}
\nu = \left(\frac{p^* + 1}{p^*} \: \one_{\Rset_+} \left( \frac{1-\lambda}{1 + \beta - \lambda} \right) + \frac{\lambda - \beta}{|1 - \lambda|} \, \sign(1 + \beta - \lambda) \right)^{-1}.
\end{equation*}
\end{definition}

Definition~\ref{def:GTDs} allows to rewrite the Lemma~\ref{Lemma:GTDs&gplambda} as follows
\begin{corollary}[Trigonometric and stretched Gaussian densities]\label{coroll}
Given real numbers $p, \lambda, \beta$ such that $p^*>0$, $1+\beta-\lambda\neq0$ and $\sign(1-\lambda+\beta) = \sign\left(\frac{1-\lambda}p+\beta\right)$,  $\lambda \neq 1, \lambda \neq \beta$ is fulfilled,
\begin{equation*}
\descort[\overline\alpha^{\, -1}]{\gauss_{p,\lambda_0}} = \kappa \, \gtd{p}{\beta}{\lambda}\left(\kappa \;y\right) = \left( \gtd{p}{\beta}{\lambda} \right)_{(\kappa)}(y),
\end{equation*}
where $\overline\alpha = 1 + \beta - \lambda, \lambda_0 = \frac{\beta}{1+\beta-\lambda} \neq 1$ and $\kappa = a_{p,\beta,\lambda}^{\frac{\lambda-\beta}{1+\beta-\lambda}}$.
\end{corollary}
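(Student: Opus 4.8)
The plan is to read Corollary~\ref{coroll} as a direct transcription of Lemma~\ref{Lemma:GTDs&gplambda} once the parameters are matched, so that no genuinely new analytic content is required: everything reduces to substituting the right stretched-Gaussian index and differential-escort order into Eq.~\eqref{eq:gplalpha} and checking that the resulting generalized trigonometric function, its exponent, its prefactor, and the rescaling of its argument coincide with those appearing in Definition~\ref{definition:GTDs}. Concretely, I would apply Lemma~\ref{Lemma:GTDs&gplambda} to the stretched Gaussian $\gauss_{p,\lambda_0}$ with $\lambda_0 = \frac{\beta}{1+\beta-\lambda}$ and escort order $\alpha = \overline\alpha^{\,-1} = \frac{1}{1+\beta-\lambda}$, and then recognize the outcome, in the scaling notation of Property~\ref{prop:scaling}, as $(\gtd{p}{\beta}{\lambda})_{(\kappa)}$ with $\kappa = a_{p,\beta,\lambda}^{\frac{\lambda-\beta}{1+\beta-\lambda}}$.

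The core of the argument is a set of elementary algebraic reductions. First I would record $1-\lambda_0 = \frac{1-\lambda}{1+\beta-\lambda}$, $1-\alpha = \frac{\beta-\lambda}{1+\beta-\lambda}$ and $\alpha-1 = \frac{\lambda-\beta}{1+\beta-\lambda}$, from which the generalized-trigonometric index of the lemma collapses to
\[
\frac{1-\lambda_0}{1-\alpha} = \frac{1-\lambda}{\beta-\lambda}, \qquad \frac{\lambda_0-1}{\alpha-1} = \frac{\lambda-1}{\lambda-\beta},
\]
exactly the indices of Eq.~\eqref{def:GTDs}, while the lemma's exponent becomes $\frac{\alpha}{\alpha-1} = \frac{1}{\lambda-\beta}$. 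For the amplitude I would use that Eq.~\eqref{def:a_pbetalambda} sets $a_{p,\beta,\lambda}=a_{p,\lambda_0}$, together with the identity $\frac{\lambda-\beta}{1+\beta-\lambda}+1 = \frac{1}{1+\beta-\lambda} = \alpha$, to conclude $a_{p,\lambda_0}^{\alpha} = \kappa\, a_{p,\beta,\lambda}$; and for the argument I would check $k_{p,\lambda_0,\alpha} = |1-\lambda_0|^{\frac1{p^*}}\, a_{p,\lambda_0}^{\frac{\lambda-\beta}{1+\beta-\lambda}} = \kappa_{p,\beta,\lambda}\,\kappa$, using $\kappa_{p,\beta,\lambda} = |1-\lambda_0|^{\frac1{p^*}}$ from Eq.~\eqref{def:TDs_K}. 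Together these show that the prefactor, power, index and rescaled argument produced by the lemma are precisely those of $(\gtd{p}{\beta}{\lambda})_{(\kappa)}$.

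The one point that requires care — and what I expect to be the main obstacle — is the matching of the two branches of the case split, since the lemma branches on the sign of $\lambda_0-1$ whereas Definition~\ref{definition:GTDs} branches on the sign of $\lambda-1$, and $\lambda_0-1 = \frac{\lambda-1}{1+\beta-\lambda}$. When $\overline\alpha = 1+\beta-\lambda>0$ the two signs agree and the hyperbolic/trigonometric branches line up directly. When $\overline\alpha<0$ the signs are opposite, so the lemma returns, say, a generalized cosine raised to a negative power where Definition~\ref{definition:GTDs} displays a hyperbolic cosine; here I would invoke the duality relation Eq.~\eqref{eq:duality2} (with the dual index fixed by Eq.~\eqref{eq:duality_param}) to identify the two expressions. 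The hypothesis $\sign(1-\lambda+\beta)=\sign\!\left(\frac{1-\lambda}p+\beta\right)$ is exactly what guarantees the Beta-function argument in Eq.~\eqref{def:a_pbetalambda} stays admissible, so that the normalization is well defined in both regimes.

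Finally I would confirm the supports agree by applying the scaling rule of Property~\ref{prop:scaling}, namely that scaling by $\kappa$ contracts the support length by $\kappa^{-1}$, to the support formula of Lemma~\ref{Lemma:GTDs&gplambda}, and by simplifying the exponent $\xi$ into $\nu$ through the same substitutions $1-\lambda_0,\,1-\alpha,\,\alpha-1$ used above; this last verification is routine once the index identifications are in hand. Assembling the amplitude, index, exponent, argument and support matches, together with the branch reconciliation via duality, yields the claimed identity $\descort[\overline\alpha^{\,-1}]{\gauss_{p,\lambda_0}} = (\gtd{p}{\beta}{\lambda})_{(\kappa)}$.
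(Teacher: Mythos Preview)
Your proposal is correct and follows exactly the approach the paper takes: the paper's own proof is the single line ``The proof trivially follows from Lemma~\ref{Lemma:GTDs&gplambda} and definition~\ref{definition:GTDs},'' and what you have written is precisely the unpacking of that sentence --- applying Lemma~\ref{Lemma:GTDs&gplambda} with $\lambda\mapsto\lambda_0$, $\alpha\mapsto\overline\alpha^{\,-1}$ and matching the resulting index, exponent, amplitude and argument with those of Definition~\ref{definition:GTDs}. Your algebraic identities are all correct, and your observation that the case $\overline\alpha<0$ swaps the signs of $\lambda_0-1$ and $\lambda-1$ and therefore requires the duality relation~\eqref{eq:duality2} to reconcile the two branches is a genuine subtlety that the paper's one-line proof passes over in silence; including it makes your argument more complete than the original.
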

\begin{proof}
The proof trivially follows from Lemma~\ref{Lemma:GTDs&gplambda} and definition~\ref{definition:GTDs}. 
\end{proof}
\noindent The generalized Gaussian densities $\gauss_{p,\lambda}$ are recovered for $\lambda=\beta,$ which is obvious because in this case $\overline \alpha=1,$ and  from property~\eqref{prop:composition}.

As previously discussed in Section~\ref{ssec:Trig},  let us emphasize that the support of the \textit{cosine} type densities in Eqs.~\eqref{eq:gplalpha} and~\eqref{def:GTDs} is not finite for all values of the parameters, nor in the hyperbolic case the support is always the entire real set $\Rset$. Note that making the distinction between the cosine and the hyperbolic cosine functions is superfluous, due to the duality relations given in Eqs.~\eqref{eq:duality} and~\eqref{eq:duality2}.

%...........some remarks...................

Let us come back to the previous comments, in light of the corollary.
\begin{remark}\label{remark_integrability_2}
Simple algebraical manipulations allows to show that the integrability conditions $p^* > 0$ and $\lambda > 1 - p^*$ in Lemma~\ref{Lemma:GTDs&gplambda} turn to be $p^* > 0$, $\sign(1 - \lambda + \beta) = \sign\left( \frac{ 1 - \lambda }{ p } + \beta \right)$ when $\lambda$ is substituted by $\lambda_0$ defined in Corollary~\ref{coroll}.Specifically,
\[\lambda_0 - 1 + p^* = \frac{ \lambda - 1 }{ 1 + \beta - \lambda } + p* = \frac{ (1 - \lambda) (p^* - 1) +p^* \beta }{ (1 + \beta - \lambda )p^*} = \frac{ \frac{ 1 - \lambda}{ p } + \beta }{ 1 + \beta - \lambda } \]
from where 
\[
\lambda_0 - 1 + p^*  > 0 \Leftrightarrow  \sign(1+\beta-\lambda)=\sign \left( \frac{ 1 - \lambda }{ p } + \beta \right) 
\]
\end{remark}
\begin{remark}\label{remark_support_length_2}
From the facts pointed out in remark~\ref{remark_compact_support_1}, definition~\ref{definition:GTDs} and corollary~\ref{coroll}, it follows that for $1 + \beta - \lambda > 0$, and $\lambda < 1$ the density $\gtd{p}{\beta}{\lambda}$ has a compact support when $\beta > 1 + \frac{ \lambda - 1 }{ p }.$ In the case $\beta \leqslant 1 + \frac{ \lambda - 1 }{ p }$ the support is the whole real set and the density has an exponential decay for $\beta = 1 + \frac{ \lambda - 1 }{p}$ and a power-law decay for $\beta < 1 + \frac{ \lambda - 1 }{p}.$ On the other hand, when $1 + \beta - \lambda > 0$ and $\lambda > 1,$ the density $\gtd{p}{\beta}{\lambda}$ has a compact support when $\beta > 1$, and the real set for $\beta \leqslant 1$ and the density has an exponential decay for $\beta = 1$ and a power-law decay for $\beta < 1.$ 
\end{remark}
\begin{remark}\label{remark_bounded_2}
From remark~\ref{remark_bounded_1}, Corollary~\ref{coroll} and the integrability condition $\sign(1 + \beta - \lambda) = \sign\left( \beta + \frac{ 1 - \lambda }{p} \right)$, it follows that for $1 + \beta - \lambda > 0$ and $\beta > \frac{ \lambda -1 }{p}$,  i.e., for $\beta > \max\left(\lambda - 1,\frac{ \lambda - 1 }{p}\right)$, the density $\minpdf{p}{\beta}{\lambda}$ is bounded. For $1 + \beta - \lambda < 0$ and $\beta < \frac{ \lambda -1 }{p}$, i.e., for $\beta < \min\left(\lambda - 1,\frac{ \lambda - 1 }{p}\right)$, density $\minpdf{p}{\beta}{\lambda}$ is divergent in the borders of its compact support. In the sequel of the paper, we will mainly focus on the subfamily of bounded probability density functions. 
\end{remark}

As proved in~\cite{Zozor17}, the minimizers of the Stam inequality Eq.~\eqref{ineq:trip_Stam} satisfy the relation $\minpdf{p}{\beta}{\lambda} =\descort[\overline\alpha^{\, -1}]{\gauss_{p,\lambda_0}}$, and thus, for $\lambda \neq 1$, and $\beta \neq \lambda$, the minimizing densities $\minpdf{p}{\beta}{\lambda}$ are  generalized trigonometric densities, as  highlighted  when discussing equation~\eqref{ineq:trip_Stam}. In summary, the  family of minimizing densities of the generalized Stam inequality~\eqref{ineq:trip_Stam} (up to a scaling operation) can be expressed as\footnote{Although the densities given in Eq.~\eqref{def:rho_pbetalambda} can be defined for any parameters $p, \beta, \lambda$ such that $\sign(1-\lambda+\beta) = \sign(\frac{1-\lambda}p+\beta)$ (see definition~\eqref{def:GTDs} for GTDs and~\cite{Zozor17} for $\mathfrak G_{p,\beta}$), the generalized Stam inequality~\eqref{ineq:trip_Stam} only holds when $\lambda > 1 - \beta p^*$.}
\begin{equation}\label{def:rho_pbetalambda}
\minpdf{p}{\beta}{\lambda}(x) = \left\{
\begin{array}{ll}
\gauss_{p,\lambda}(x) & \mbox{if} \:\: \beta=\lambda,\\[2mm]
\mathfrak G_{p,\beta}(x) &  \mbox{if} \:\: \lambda=1,\\[2mm]
\gtd{p}{\beta}{\lambda}(x) & \mbox{otherwise}
\end{array}\right.
\end{equation}
where $\gauss_{p,\lambda}$ is given Eq.~\eqref{def:g_plambda}, $\gtd{p}{\beta}{\lambda}$ is given Eq.~\eqref{def:GTDs} and where
\begin{equation*}
\mathfrak G_{p,\beta}(x) = a_{p,\beta,1} \, \exp\left(-\frac{\Gamma^{-1}\left(\frac1{p^*},\,\frac{p^*}{(\beta^*)^{p^*}}|x|\right) }{\beta-1} \right),\qquad \beta\neq1,
\end{equation*} 
with $\Gamma^{-1}$  the inverse function  of the incomplete Gamma function, $a_{p,\beta,1}$ given in equation~\eqref{def:a_pbetalambda}, and the support $\support_{p,\beta,1}$ given by 
\[
\support_{p,\beta,1} = \left\{\begin{array}{ccc}
\left( - \, \frac{\left[ \Gamma\left( \frac1{p^*} \right) \right]^{1-\frac2\beta} \beta^{\frac1{p^*}}}{\left[ 4 p^* \right]^{\frac1\beta} (1-\beta)^{\frac1{p^*}}} \: , \: \frac{\left[ \Gamma\left( \frac1{p^*} \right) \right]^{1-\frac2\beta} \beta^{\frac1{p^*}}}{\left[ 4 p^* \right]^{\frac1\beta} (1-\beta)^{\frac1{p^*}}} \right)
& \mbox{if} & \beta  < 1, \\[7.5mm]
\Rset & \mbox{if} & \beta > 1.
\end{array}\right.
\]

These densities are widely studied in~\cite{Zozor17}, where the following relationship between the minimizers is also provided,
\begin{equation}\label{eq:trans_minpdf}
\descort[\alpha]{\minpdf{p}{\beta}{\lambda}} = \minpdf{p}{\frac\beta\alpha}{1+\frac{\lambda-1}\alpha}
\end{equation}

On the other hand, it is worth mentioning the existence of a symmetry-like relation between the minimizing densities summarized in the following Lemma.
\begin{lemma}[Zozor et al.~\cite{Zozor17}]\label{Lemma:rho_pbetalambda-symm_rel}
Let be $\minpdf {p}{\beta}{\lambda}$ a minimizer of the extended Stam inequality~\eqref{ineq:trip_Stam}. Then 
\begin{equation}\label{eq:rho_pbetalambda-symm_rel}
\minpdf {p}{\frac{\beta p^*+\lambda-1}{\lambda p ^*}}{\frac1\lambda} \propto \left[ \minpdf {p}{\beta}{\lambda} \right]^\lambda.
\end{equation}
\end{lemma}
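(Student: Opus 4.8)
The plan is to reduce Eq.~\eqref{eq:rho_pbetalambda-symm_rel} to the explicit representation of the minimizers through generalized (co)sine functions, and then to trade a power of a cosine for a power of a hyperbolic cosine (and conversely) by means of the duality relation Eq.~\eqref{eq:duality2}. Since the product $\phi_{p,\beta}[\pdf]\,N_\lambda[\pdf]$ entering the extended Stam inequality~\eqref{ineq:trip_Stam} is invariant under the scaling $\pdf\mapsto\pdf_{(\kappa)}$, the minimizers are fixed only up to a scaling; accordingly I would read $\propto$ as equality up to a scaling of the argument and a positive multiplicative constant, and carry both loosely, absorbing them at the end. I would first treat the generic case $\lambda\neq1,\ \beta\neq\lambda$, where $\minpdf{p}{\beta}{\lambda}=\gtd{p}{\beta}{\lambda}$ by Eq.~\eqref{def:rho_pbetalambda}. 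Taking $\lambda>1$, Eq.~\eqref{def:GTDs} gives, on the support, $\minpdf{p}{\beta}{\lambda}(y)\propto\left[\cos_{v,p^*}(\kappa_{p,\beta,\lambda}\,y)\right]^{\frac1{\lambda-\beta}}$ with $v=\frac{\lambda-1}{\lambda-\beta}$, so that $\left[\minpdf{p}{\beta}{\lambda}(y)\right]^\lambda\propto\left[\cos_{v,p^*}(\kappa_{p,\beta,\lambda}\,y)\right]^{\frac{\lambda}{\lambda-\beta}}$. Writing $\cos_{v,p^*}(x)=\left[\cosh_{r,p^*}(x)\right]^{-\frac rv}$ with $r$ defined by Eq.~\eqref{eq:duality_param}, i.e.\ $\frac1v+\frac1r=1+\frac1{p^*}$, this turns into a hyperbolic-cosine power, $\left[\minpdf{p}{\beta}{\lambda}(y)\right]^\lambda\propto\left[\cosh_{r,p^*}(\kappa_{p,\beta,\lambda}\,y)\right]^{-\frac rv\cdot\frac{\lambda}{\lambda-\beta}}$.

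I would then match this with a hyperbolic-type ($L<1$) generalized trigonometric density $\minpdf{p}{B}{L}$, whose shape by Eq.~\eqref{def:GTDs} is $\left[\cosh_{\frac{1-L}{B-L},p^*}(\cdot)\right]^{\frac1{L-B}}$. Equating orders and exponents gives $\frac{1-L}{B-L}=r$ and $\frac1{L-B}=-\frac rv\cdot\frac{\lambda}{\lambda-\beta}$. Since $v(\lambda-\beta)=\lambda-1$, the second relation reduces to $B-L=\frac{\lambda-1}{r\lambda}$; inserting this into the first yields $1-L=r(B-L)=\frac{\lambda-1}{\lambda}$, hence $L=\frac1\lambda$, exactly the announced Rényi slot. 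Finally, substituting $\frac1r=1+\frac1{p^*}-\frac{\lambda-\beta}{\lambda-1}$ from Eq.~\eqref{eq:duality_param} into $B=L+\frac{\lambda-1}{r\lambda}$ gives, after simplification, $B=\frac{\beta p^*+\lambda-1}{\lambda p^*}$, the first index in Eq.~\eqref{eq:rho_pbetalambda-symm_rel}. The case $\lambda<1$ is symmetric: now $\minpdf{p}{\beta}{\lambda}$ is of hyperbolic type and the companion duality $\cosh_{v,p^*}(x)=\left[\cos_{r,p^*}(x)\right]^{-\frac rv}$ converts its $\lambda$-th power into a cosine power whose order and exponent match those of the trigonometric-type $\minpdf{p}{\frac{\beta p^*+\lambda-1}{\lambda p^*}}{1/\lambda}$ (with $1/\lambda>1$) through the same two identities.

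Two degenerate configurations must be checked apart. For $\lambda=1$ one has $1/\lambda=1$ and $\frac{\beta p^*+\lambda-1}{\lambda p^*}=\beta$, so the statement reads $\mathfrak{G}_{p,\beta}\propto\mathfrak{G}_{p,\beta}$ and holds trivially. For $\beta=\lambda$ the minimizer is the stretched Gaussian $\gauss_{p,\lambda}$ (Eq.~\eqref{def:rho_pbetalambda}) and the cosine representation degenerates; there I would instead use Eq.~\eqref{def:g_plambda} together with the Tsallis-exponential identity $\left[\exp_\mu(z)\right]^s=\exp_{1-\frac{1-\mu}s}(sz)$ following from Eq.~\eqref{def:q-exp}, which shows that $\left[\gauss_{p,\lambda}\right]^\lambda$ is, up to scaling, the stretched Gaussian $\gauss_{p,1+\frac{\lambda-1}\lambda}$, and then check that this coincides up to scaling with $\minpdf{p}{\frac{\beta p^*+\lambda-1}{\lambda p^*}}{1/\lambda}$, the latter being a GTD that collapses onto a stretched Gaussian precisely because its duality parameter $r$ diverges.

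The expected difficulty is bookkeeping rather than conceptual: keeping track of the signs of $v$ and of the exponents (which may be negative), of the argument-rescaling and normalization hidden inside $\propto$, and, most delicately, verifying that the produced pair $(B,L)=\left(\frac{\beta p^*+\lambda-1}{\lambda p^*},\frac1\lambda\right)$ still satisfies the admissibility conditions of Definition~\ref{definition:GTDs}, namely $1-L+B\neq0$ and $\sign(1-L+B)=\sign\!\left(\frac{1-L}p+B\right)$. Checking that these sign/range constraints are inherited is the only genuinely non-routine step; the rest is the algebraic simplification sketched above.
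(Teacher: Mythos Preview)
Your proposal is correct and follows precisely the route the paper itself indicates. The paper does not give a self-contained proof of this lemma---it cites~\cite{Zozor17} and then remarks that ``for $\lambda\neq\beta$ and $\lambda\neq1$, the above Lemma~\ref{Lemma:rho_pbetalambda-symm_rel} can be easily derived through the Miyakawa duality relation~\eqref{eq:duality2}'', which is exactly your main computation; your separate treatment of the degenerate cases $\lambda=1$ and $\beta=\lambda$ (via the Tsallis-exponential identity) is in the same spirit as the paper's subsequent remark that $\minpdf{p}{\frac{p^*+1-\lambda}{p^*}}{\lambda}=\gauss_{p,2-\lambda}$, and in fact supplies the details the paper omits.
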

Note that, for $\lambda \neq \beta,$ and $\lambda \neq 1$, the above Lemma~\ref{Lemma:rho_pbetalambda-symm_rel} can be easily derived through the Miyakawa duality relation~\eqref{eq:duality2}. Also, in the case $\lambda=\beta$, and after a re-parametrization $\lambda\mapsto\frac 1\lambda,$ in the above relation, one immediately obtains again the generalized Gaussian densities \ $\minpdf{p}{\frac{p^*+1-\lambda}{p^*}}{\lambda} = \gauss_{p,2-\lambda}.$

Note that the hyperbolic secant probability density belongs to the family of GTDs, obtained from the set of parameters $p = 2, \beta = 0$ and $\lambda = -1$, which arises performing the differential-escort transform $\descort[\frac12]{}$ of the Cauchy density $\gauss_{2,0}$. The probability density of the logistic distribution~\cite{JohKot95:v2}  also belongs to GTDs family from the set of parameters $p=2, \beta = \frac12$ and $\lambda = 0,$ which corresponds to the differential-escort density $\descort[\frac32]{\gauss_{2,2}}$. Finally, the probability density of the raised cosine distribution~\cite{Ahsanullah18} is also included in the family of generalized trigonometric densities, from $p=2, \, \beta = \frac32, \, \lambda=2$, or equivalently, from the differential-escort density $\descort[2]{g_{2,3}}$.

% -------------------- cumulative moments of q-min pdf 

\subsection{Cumulative moments of $\minpdf{p}{\beta}{\lambda}$}

The objective of this section is to carry out a study of the cumulative moments of the whole family of densities $\minpdf{p}{\beta}{\lambda}$ involving GTDs, but also of that of stretched Gaussian, as well as that of the Gaussian distribution.

% ---------- region of definition

\subsubsection{Region of definition of the cumulative moments}

In the following lemma, in the conditions for which  $\minpdf{p}{\beta}{\lambda}$ is a bounded probability density, i.e., for $p^* > 0 $ and $\beta > \max\left( \lambda - 1, \frac{ \lambda - 1 }{ p } \right)$ (see remark~\ref{remark_bounded_2}), we give necessary and sufficient conditions of the cumulative moment parameters so that the corresponding cumulative moment is finite.
\begin{lemma}
Given  $p^*>0, \; \lambda, \beta$ such that $\beta > \max\left( \lambda - 1,\frac{ \lambda - 1}{p}\right)$, there exists a critical value
\[
\gamma_c = \beta + \frac{(1-\lambda)_+}p,
\]
such that, for any $q > 0$ and $\gamma >\gamma_c$, then
\begin{equation*}
\mu_{q,\gamma}\left[ \minpdf{p}{\beta}{\lambda} \right] \, < \, \infty \quad \Leftrightarrow \quad q \, < \, \frac{p\beta + (1-\lambda)_+}{p (\gamma-\beta) - (1-\lambda)_+}.
\end{equation*}
Moreover, 
$\mu_{q,\gamma}\left[ \minpdf{p}{\beta}{\lambda} \right]<\infty$ for all $q>0,$ and for any $\gamma\leqslant \gamma_c.$
\end{lemma}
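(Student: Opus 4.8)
The plan is to reduce every cumulative moment to an ordinary absolute moment of a differential-escort transform, and then read off its finiteness from the tail of that transform. Concretely, Property~\ref{prop:mu_p-esc} gives $\mu_{q,\gamma}\left[\minpdf{p}{\beta}{\lambda}\right]=\mu_q\left[\descort[\gamma]{\minpdf{p}{\beta}{\lambda}}\right]$, and by Eq.~\eqref{eq:trans_minpdf} the transformed density is again a minimizer,
\[
\descort[\gamma]{\minpdf{p}{\beta}{\lambda}}=\minpdf{p}{\beta/\gamma}{\,1+(\lambda-1)/\gamma\,}.
\]
Hence the whole statement reduces to deciding, for fixed $q>0$, whether the ordinary $q$-th absolute moment of this explicit density is finite. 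Since $\mu_q[\cdot]$ is automatically finite for a density with compact support or with exponential decay, and is finite for a power-law tail $\mathcal{O}(|y|^{-\eta})$ exactly when $q<\eta-1$, the argument splits into a tail classification followed by the computation of the decay exponent.

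First I would classify the tail of the transformed minimizer. Writing $b=\beta/\gamma$ and $\ell=1+(\lambda-1)/\gamma$, one checks that $1+b-\ell=(1+\beta-\lambda)/\gamma>0$ under the standing hypotheses, so Remark~\ref{remark_support_length_2} applies directly (with the support description of $\mathfrak G_{p,b}$ covering the borderline subcase $\lambda=1$). A short comparison of the thresholds there with $\gamma_c=\beta+(1-\lambda)_+/p$ shows that the transform has compact support for $\gamma<\gamma_c$, exponential decay for $\gamma=\gamma_c$, and power-law decay for $\gamma>\gamma_c$, the two branches $\lambda<1$ and $\lambda\geq1$ reproducing precisely the two branches of $(1-\lambda)_+$. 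This already yields the last assertion: for $\gamma\leq\gamma_c$ the transform has compact support or exponential decay, so $\mu_q<\infty$ for every $q>0$, i.e. $\mu_{q,\gamma}\left[\minpdf{p}{\beta}{\lambda}\right]<\infty$ for all $q>0$.

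The core of the proof is the exponent in the power-law regime $\gamma>\gamma_c$. Using the composition Property~\ref{prop:composition} together with the representation $\minpdf{p}{\beta}{\lambda}=\descort[\overline\alpha^{\,-1}]{\gauss_{p,\lambda_0}}$ from Corollary~\ref{coroll} (with $\overline\alpha=1+\beta-\lambda>0$ and $\lambda_0=\beta/\overline\alpha$), one has $\descort[\gamma]{\minpdf{p}{\beta}{\lambda}}=\descort[\gamma/\overline\alpha]{\gauss_{p,\lambda_0}}$, so it suffices to track how a single escort acts on a stretched Gaussian. I claim the resulting tail exponent is
\[
\eta=\frac{\gamma}{\gamma-\gamma_c}.
\]
When $\lambda<1$ one has $\lambda_0<1$, so $\gauss_{p,\lambda_0}$ itself decays as a power law with exponent $p^*/(1-\lambda_0)=p^*\overline\alpha/(1-\lambda)$; applying Lemma~\ref{Lemma:tail} with $\alpha=\gamma/\overline\alpha$ and simplifying (using $1/p^*=1-1/p$) gives exactly $\eta=\gamma/(\gamma-\gamma_c)$, the critical exponent $\alpha_c$ of that lemma corresponding to $\gamma=\gamma_c$. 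When $\lambda\geq1$ one has $\lambda_0\geq1$, so $\gauss_{p,\lambda_0}$ has compact support (or stretched-exponential decay at $\lambda=1$) and Lemma~\ref{Lemma:tail} is unavailable; here I would instead analyze directly the power-type vanishing $\gauss_{p,\lambda_0}(x)\sim c\,(x_{\max}-x)^{1/(\lambda_0-1)}$ at the boundary, show that for $\alpha=\gamma/\overline\alpha>\lambda_0$ the change of variables $y(x)=\int_0^x \gauss_{p,\lambda_0}^{\,1-\alpha}$ sends the boundary to infinity, and extract the tail of $[\gauss_{p,\lambda_0}]^{\alpha}$ in the variable $y$, obtaining again $\eta=\alpha/(\alpha-\lambda_0)=\gamma/(\gamma-\gamma_c)$.

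Finally, inserting $\eta=\gamma/(\gamma-\gamma_c)$ into the power-law criterion gives, for $\gamma>\gamma_c$,
\[
\mu_{q,\gamma}\left[\minpdf{p}{\beta}{\lambda}\right]<\infty \iff q<\eta-1=\frac{\gamma_c}{\gamma-\gamma_c}=\frac{p\beta+(1-\lambda)_+}{p(\gamma-\beta)-(1-\lambda)_+},
\]
the last equality being the identities $\gamma_c=(p\beta+(1-\lambda)_+)/p$ and $\gamma-\gamma_c=(p(\gamma-\beta)-(1-\lambda)_+)/p$. I expect the main obstacle to be the $\lambda\geq1$ branch of the exponent computation: there the intermediate stretched Gaussian has compact support, so the ready-made Lemma~\ref{Lemma:tail} does not apply and one must carry out the boundary-to-infinity asymptotics of the escort change of variables by hand, carefully handling the threshold $\alpha=\lambda_0$ (equivalently $\gamma=\gamma_c$) at which the behavior switches from compact support to exponential to power-law decay.
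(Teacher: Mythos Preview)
Your approach is essentially the same as the paper's: reduce $\mu_{q,\gamma}[\minpdf{p}{\beta}{\lambda}]$ to an ordinary moment of $\descort[\gamma]{\minpdf{p}{\beta}{\lambda}}=\minpdf{p}{\beta/\gamma}{1+(\lambda-1)/\gamma}$ via Property~\ref{prop:mu_p-esc} and Eq.~\eqref{eq:trans_minpdf}, then read off finiteness from the tail exponent, using Lemma~\ref{Lemma:tail} on the underlying stretched Gaussian in the $\lambda<1$ branch. The organization and the final algebra coincide.

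The one substantive difference is the $\lambda>1$ branch. The paper does \emph{not} carry out the boundary-to-infinity asymptotics you propose; instead it invokes the symmetry relation of Lemma~\ref{Lemma:rho_pbetalambda-symm_rel}, which identifies $[\minpdf{p}{\beta}{\lambda}]^\lambda$ with a minimizer having third parameter $1/\lambda<1$, and thereby transports the already-computed exponent $\eta_{p,\beta',\lambda'}$ from the $\lambda<1$ case to obtain $\eta_{p,\beta,\lambda}=1/(1-\beta)$ directly. For $\lambda=1$ the paper then argues by continuity, since both the $\lambda<1$ and $\lambda>1$ formulas for $\eta_{p,\beta/\gamma,1+(\lambda-1)/\gamma}$ agree in the limit. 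Your route---analyzing the vanishing $\gauss_{p,\lambda_0}(x)\sim c\,(x_{\max}-x)^{1/(\lambda_0-1)}$ at the edge of the compact support and tracking it through the escort change of variables---is more self-contained (it avoids Lemma~\ref{Lemma:rho_pbetalambda-symm_rel} entirely) and yields the same exponent $\eta=\alpha/(\alpha-\lambda_0)=\gamma/(\gamma-\beta)$, but at the cost of the explicit local computation you rightly flag as the main effort. Either method is fine; the paper's is shorter once the symmetry lemma is available.
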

\begin{proof}
First,  consider the case $\lambda_0 < 1$ in which
\[
\gauss_{p,\lambda_0}(x) \, \sim \, |x|^{-\frac{p^*}{1-\lambda_0}},\quad |x| \gg 1.
\]
meaning that the integrability requires $p^* > 1 - \lambda_0$.
Then, using Lemma~\ref{Lemma:tail}, when $\alpha > 1 - \frac{1-\lambda_0}{p^*}$, asymptotics $\descort[\alpha]{\gauss_{p,\lambda_0}}(y) \sim y^{-\frac{\alpha p^*}{\alpha p^* - p^* - \lambda_0} + 1}, \quad  |y| \gg 1$ is fulfilled. Note that, from integrability condition $1 - \frac{1-\lambda_0}{p^*} > 0$ parameter $\alpha$ is always positive. Now, using the parameters defined in Corollary~\ref{coroll} (i.e., replacing $\alpha$ by $\overline{\alpha}^{\, -1} = \frac1{1+\beta \lambda}$ and with $\lambda_0 = \frac{\beta}{1+\beta-\lambda}$), and after some algebraic manipulations one obtains: 
\begin{equation}\label{def:eta}
\minpdf{p}{\beta}{\lambda}(x) \, \sim \, x^{-\eta_{p,\beta,\lambda}}, \quad |x| \gg 1 \qquad \mbox{with \: } \eta_{p,\beta,\lambda}=\frac{p^*}{p^*(\lambda-\beta)+1-\lambda},
\end{equation} 
provided that
\begin{equation}\label{eq:cond_1}
p^* ( \lambda - \beta ) + 1 - \lambda > 0.
\end{equation}
Additionally, note that  from $\overline \alpha=1+\beta-\lambda>0$, the integrability condition of $\minpdf{p}{\beta}{\lambda}$ turns to be $\beta  > \frac{\lambda - 1}{p}$ (see Def.~\ref{definition:GTDs}), which implies that $\eta_{p,\beta,\lambda} > 1,$  with $\eta_{p,\beta,\lambda} \to 1$ when $\beta p + 1 - \lambda \to 0^+$, as expected. This set of remarks allows us to conclude that
\begin{equation*}
\mu_q\left[ \minpdf{p}{\beta}{\lambda} \right] \, < \,  \infty \quad %
\Leftrightarrow \quad \eta_{p,\beta,\lambda} - q > 1 \quad
\Leftrightarrow \quad q < \frac{1 + p \beta - \lambda}{\lambda + p \, (1-\beta) - 1}.
\end{equation*}
On the other hand, using the property~\ref{prop:mu_p-esc} we have $\mu_{q,\gamma}[\pdf] = \mu_q\left[ \descort[\gamma]{\pdf} \right],$  so
\begin{equation}\label{eq:muqgamma(g)}
\mu_{q,\gamma}\left[ \minpdf{p}{\beta}{\lambda} \right]
= \mu_q\left[ \descort[\gamma]{\minpdf{p}{\beta}{\lambda}} \right]
= \mu_q\left[ \minpdf{p}{\frac\beta\gamma}{\frac{\lambda-1}{\gamma}+1} \right],\qquad \gamma\neq 0.
\end{equation}
Now, by replacing $\beta$ by $\frac{\beta}{\gamma}$ and $\lambda$ by $\frac{\lambda - 1}{\gamma} + 1$ in Eq.~\eqref{eq:cond_1} and taking into account that $\beta + \frac{ 1 - \lambda }{p} > 0$, it follows from Eq.~\eqref{eq:muqgamma(g)} that, given $\gamma > \beta + \frac{ 1 - \lambda }{ p }$,
\begin{equation}\label{eq:eta(gamma)1}
\mu_{q,\gamma}\left[ \minpdf{p}{\beta}{\lambda} \right] \, < \, \infty \quad \Leftrightarrow \quad q < \eta_{p,\frac\beta\gamma,\frac{\lambda-1}{\gamma}+1}-1 = \frac{p \, \beta+1-\lambda}{p \, (\gamma-\beta)+\lambda-1}.
\end{equation}
Now, from Lemma~\ref{Lemma:mu_plambda-finite} and Eq.~\eqref{def:eta}, after simple algebraic manipulations one has that, for any $\gamma \leqslant \gamma_c$ the cumulative moments are finite, $\mu_{q,\gamma}[\minpdf{p}{\beta}{\lambda}]<\infty,$ where
\begin{equation}\label{eq:gamma_crit}
\gamma_c = \frac{1}{\eta_{p,\beta,\lambda}^*} = \beta+\frac{1-\lambda}p.
\end{equation}
Note that $\gamma_c$ is the value of $\gamma$ that vanishes the denominator of the right handside of the second equation of Eq.~\eqref{eq:eta(gamma)1}.

To analyze the case $\lambda > 1$, we consider the symmetry relation~\eqref{eq:rho_pbetalambda-symm_rel} in equation~\eqref{def:eta} to obtain
\begin{equation*}
\eta_{p,\beta,\lambda} = \frac1{1-\beta}, \quad \lambda > 1, \quad \beta \neq 1,
\end{equation*}
where, in contrast to what happens when $\lambda>1$, both parameters $p$ and $\lambda$ have disappeared.  Then,
\begin{equation}\label{eq:eta(gamma)2}
\eta_{p,\frac\beta\gamma,\frac{\lambda-1}{\gamma}+1} = \frac{\gamma}{\gamma-\beta},
\end{equation}
that leads to
$$
\mu_{q,\gamma}\left[\minpdf{p}{\beta}{\lambda} \right] \, < \, \infty \quad \Leftrightarrow \quad q < \frac{\gamma}{\gamma-\beta}-1 = \frac{\beta}{\gamma-\beta},$$
and  the critical value $\gamma_c$ is given by
\begin{equation}\label{eq:gamma_crit_lambda>1}
\gamma_c = \beta, \quad \lambda > 1.
\end{equation}

The case $\lambda = 1$ is more complicated to analyze in a direct way. However, both equations~\eqref{eq:eta(gamma)1} and~\eqref{eq:eta(gamma)2} coincide in the limit $\lambda \to 1.$ Hence, from continuity arguments, one has that in the case $\lambda = 1$ equations~\eqref{eq:eta(gamma)1},~\eqref{eq:gamma_crit},~\eqref{eq:eta(gamma)2} and~\eqref{eq:gamma_crit_lambda>1} remain valid.
\end{proof}

Let us now pay a special attention to the densities with exponential decay. The stretched Gaussian densities are recovered when $\lambda = \beta = 1$, i.e., $\gauss_{p,1}(x) = e^{-|x|^{p^*}}$.

In the case  $\lambda < 1$, the critical condition $\beta=1+\frac{\lambda-1}p$ (see remark~\ref{remark_support_length_2}), and the definitions~\ref{def:GTDs} and~\ref{def:rho_pbetalambda} imply that the densities are of hyperbolic cosine type, $\minpdf{p}{\beta}{\lambda }\propto \cosh_{p^*,p^*}\left( \kappa_{p,\beta , \lambda} \, y \right)^{\frac{p ^*}{\lambda-1}} = \cosh_{p^*}\left( \kappa_{p,\beta , \lambda} \, y \right)^{\frac{p ^*}{\lambda-1}},$ which have an exponential decay.

In the case $\lambda > 1$, we obtain the symmetrical case when $\beta = 1$, for which $\minpdf{p}{1}{\lambda}$ is given through the generalized cosine function $\cos_{1,p^*}$.  By definition\footnote{The generalized $v-$tangent function is defined as $ \tan_v = \frac{\sin_v(x)}{\cos_v(x)}$ where the notation $\sin_v = \sin_{v,v}$ and $\cos_v = \cos_{v,v}$ is usually adopted in the literature~\cite{Yin19}.}
\begin{equation}\label{def:arctan_p}
\arsinh_{1,p}(x) \, = \, \int_0^x \frac{dt}{1+|t|^p} \, \equiv \, \arctan_p(x)
\end{equation}
so that $\sinh_{1,p}(x)=\tan_p(x)$ and $\cosh_{1,p}(x)=\frac1{\cos_p^p(x)}, $ precisely corresponding to the symmetrical case of the above cosine densities.

The above-mentioned densities are the only ones in the family $\minpdf{p}{\beta}{\lambda}$ with an exponential decay.

\begin{remark}
From Eq.~\eqref{eq:gplalpha}, with $\lambda<1$, and imposing $\frac{1-\lambda}{1-\alpha}=p^*$ (or equivalently $\alpha=1+\frac{\lambda-1}{p^*} $)  one has that $\descort[\alpha]{\gauss_{p,\lambda}}$ has an exponential decay and then each moment is finite. Then, $m_i=\mu_{\alpha,i}[\gauss_{p,\lambda}]$ is finite for any $i\in\mathbb N,$ as far as they correspond to the moments of an exponential decaying GTD with the form $\cosh_{p^*}\left( \kappa_{p,\beta , \lambda} \, y \right)^{\frac{p ^*}{\lambda-1}}.$ In the same line as in Section~\ref{ssec:cauchy}, stretched Gaussian densities are characterized by $\left\{ \left\{ 1 +\frac{\lambda-1}{p ^*} \right\} \, , \, \{m_i\}_{i=1}^{+\infty} \right\}.$ In this sense, this subfamily of GTDs can be understood as an auxiliary family of functions with all their moments finite, which allows to recover the whole stretched Gaussian densities family.
\end{remark}

% ================== Cumulative inequalities ================== %

\section{Cumulative inequalities}\label{Sec_CumIneq}

In this last section, we show that generalized cumulative moments are lower bounded by Rényi and Shannon entropies, as well as, by the inverse of the family of biparametric generalized Fisher informations. In addition, the lower bounds are achieved for the same family of densities $\minpdf{p}{\beta}{\lambda}$ which minimizes the extension of Stam inequality with three parameters proposed in~\cite{Zozor17}. Here, the third parameter is the one that appears in the generalized cumulative moments. The study is restricted to the parameters such that the minimizing densities are bounded.

% -------------------- entropy-cumulative moment 

\subsection{Entropy-momentum like inequality}

In the following lines we show that, for a fixed generalized cumulative moment, the family of densities $\minpdf{p}{\beta}{\lambda}$ maximizes the Rényi entropies (and the Shannon one as a particular case), thus generalizing the moment-entropy relations of~\cite{Lutwak04, Bercher12}. In particular, GTDs and $\mathfrak G_{p,\beta}$ maximize Rényi and Shannon entropies respectively.
\begin{theorem}\label{EMineq3}
Let be a real number\footnote{The case $p^* \to +\infty$ given by $p = 1$ is also included in the result.} $p^* \in [0,\infty)$, and two real numbers $\lambda$ and $\beta$ such that $ \beta > \max\left( \lambda-1 , \frac{1-\lambda} p \right).$ Then, for any continuously differentiable probability density $\pdf$, the following inequality is fulfilled
\begin{equation}\label{cumulative_EM}
\frac{\sigma_{p^*,1+\beta-\lambda}[\pdf]}{N_\lambda[\pdf]} \, \geqslant \, \frac{\sigma_{p^*,1+\beta-\lambda}[\minpdf {p}{\beta}{\lambda}]}{N_\lambda[\minpdf {p}{\beta}{\lambda}]} \, \equiv \, \overline{K}_{p,\beta,\lambda}.
\end{equation} 
Moreover, the family of minimizing densities is given by $\minpdf{p}{\beta}{\lambda},$ up to a scaling transformation. The optimal bound is given by
\begin{equation*}
\overline{K}_{p,\beta,\lambda} = \left\{\begin{array}{lll}
\displaystyle \frac{ \left(p^*\beta\right)^{\frac1{\lambda-1}} \, (1+\beta-\lambda)^{\frac{1}{p^*(1+\beta-\lambda)}} }{\left(p^*\beta+\lambda-1\right)^{\frac{1}{\lambda-1}+\frac1{p^*(1+\beta-\lambda)}} } \: a_{p,\beta,\lambda}^\frac1{1+\beta-\lambda} & \mbox{for}  & \lambda \neq 1,\\[7.5mm]
\displaystyle  \left(\frac{(p^*)^{\frac1{p}}\,e^{-\frac1{p^*}}}{2\,\Gamma(\frac1{p^*})}\right)^{\frac 1{\beta}} & \mbox{for} &  \lambda = 1.
\end{array}
\right.
\end{equation*}
\end{theorem}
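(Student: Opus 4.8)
The plan is to reduce the three-parameter cumulative entropy--momentum inequality~\eqref{cumulative_EM} to the standard bi-parametric one~\eqref{ineq:bip_E-M} by a single differential-escort transformation, and then to transport the extremizer back through Corollary~\ref{coroll}. Set $\alpha = 1+\beta-\lambda$, which is strictly positive because the hypothesis forces $\beta>\lambda-1$, and set $\lambda_0 = \frac{\beta}{1+\beta-\lambda}$. The point is that applying $\descort[\alpha]{}$ turns the cumulative deviation of order $(p^*,\,1+\beta-\lambda)$ into an \emph{ordinary} deviation and simultaneously shifts the R\'enyi order in precisely the way needed to recover $N_\lambda[\pdf]$.

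First I would rewrite the left-hand ratio of~\eqref{cumulative_EM} in terms of $\tilde\pdf := \descort[\alpha]{\pdf}$. By Property~\ref{prop:mu_p-esc}, $\sigma_{p^*}[\tilde\pdf] = \big(\sigma_{p^*,\,1+\beta-\lambda}[\pdf]\big)^{1+\beta-\lambda}$. By Property~\ref{prop:Renyi-esc} applied at order $\lambda_0$, and using that $1+\alpha(\lambda_0-1)=\lambda$ by the very choice of $\lambda_0$, one gets $N_{\lambda_0}[\tilde\pdf] = \big(N_\lambda[\pdf]\big)^{1+\beta-\lambda}$. Dividing,
\[
\frac{\sigma_{p^*}[\tilde\pdf]}{N_{\lambda_0}[\tilde\pdf]} = \left(\frac{\sigma_{p^*,\,1+\beta-\lambda}[\pdf]}{N_\lambda[\pdf]}\right)^{1+\beta-\lambda}.
\]
This is the crux: up to the positive power $1+\beta-\lambda$, the three-parameter cumulative ratio for $\pdf$ equals the two-parameter ordinary ratio for $\tilde\pdf$ at R\'enyi order $\lambda_0$.

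Next I would invoke~\eqref{ineq:bip_E-M} for $\tilde\pdf$ at parameters $(p,\lambda_0)$, whose minimizer is the stretched Gaussian $\gauss_{p,\lambda_0}$; since $\alpha>0$, taking the $\alpha$-th root preserves the inequality and yields~\eqref{cumulative_EM} with
\[
\overline K_{p,\beta,\lambda} = \left(\frac{\sigma_{p^*}[\gauss_{p,\lambda_0}]}{N_{\lambda_0}[\gauss_{p,\lambda_0}]}\right)^{\frac1{1+\beta-\lambda}}.
\]
Equality forces $\tilde\pdf = \gauss_{p,\lambda_0}$ up to scaling, i.e. $\pdf = \descort[\alpha^{-1}]{\gauss_{p,\lambda_0}}$, which by Corollary~\ref{coroll} (with $\overline\alpha = 1+\beta-\lambda$ and $\lambda_0 = \frac{\beta}{1+\beta-\lambda}$) is precisely $\gtd{p}{\beta}{\lambda}$ up to scaling, in agreement with~\eqref{def:rho_pbetalambda}. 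The ``up to scaling'' is harmless because both $\sigma_{p^*,\gamma}$ and $N_\lambda$ rescale as $\kappa^{-1}$ under $\pdf\mapsto\pdf_{(\kappa)}$ (Property~\ref{prop:scaling-mu_plambda} for the former; the definition of the R\'enyi entropy power for the latter), so the ratio in~\eqref{cumulative_EM} is scale invariant.

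Two points need care. The first is checking that the hypothesis $\beta>\max\!\big(\lambda-1,\tfrac{1-\lambda}{p}\big)$ is exactly what makes $\lambda_0>\tfrac1{1+p^*}$ hold, the admissibility condition of~\eqref{ineq:bip_E-M} at order $\lambda_0$; this reduces to a short inequality in $\beta,\lambda,p^*$ using $1+\beta-\lambda>0$, and the Shannon endpoint $\lambda=1$ is then recovered as the limit $\lambda_0\to1$. The second, and the only genuinely laborious step, is converting the compact expression for $\overline K_{p,\beta,\lambda}$ above into the closed forms stated in the theorem: this requires the explicit values of $\sigma_{p^*}[\gauss_{p,\lambda_0}]$ and $N_{\lambda_0}[\gauss_{p,\lambda_0}]$ for the stretched Gaussian together with the normalization $a_{p,\beta,\lambda}$ from~\eqref{def:a_pbetalambda}, after which the simplification is routine but heavy in Beta/Gamma bookkeeping, the $\lambda=1$ row again being obtained by the $\lambda\to1$ limit.
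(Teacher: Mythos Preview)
Your proposal is correct and follows essentially the same route as the paper's proof: reduce the cumulative inequality to the bi-parametric entropy--momentum inequality~\eqref{ineq:bip_E-M} at order $\lambda_0=\frac{\beta}{1+\beta-\lambda}$ via the differential-escort transform $\descort[1+\beta-\lambda]{}$ (Properties~\ref{prop:Renyi-esc} and~\ref{prop:mu_p-esc}), then pull the stretched-Gaussian minimizer $\gauss_{p,\lambda_0}$ back through Corollary~\ref{coroll} to identify $\minpdf{p}{\beta}{\lambda}$, and finally evaluate $\sigma_{p^*}[\gauss_{p,\lambda_0}]$ and $N_{\lambda_0}[\gauss_{p,\lambda_0}]$ explicitly to get $\overline K_{p,\beta,\lambda}$. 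The only cosmetic difference is that the paper inserts $\descort[\alpha\alpha^{-1}]{}$ into the bi-parametric inequality and then renames, whereas you push $\pdf$ forward to $\tilde\pdf=\descort[\alpha]{\pdf}$ and apply~\eqref{ineq:bip_E-M} there; both use the closure $\descort[\alpha]{\D}=\D$ in the same way.
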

\begin{proof}
The detail of the proof, postponed to Appendix~\ref{App_proofs}, mainly lies on the application of differential-escort transforms to inequality~\eqref{ineq:bip_E-M}.
The key points are the following. Starting from the inequality~\eqref{ineq:bip_E-M}, and using the composition property~\ref{prop:composition} of the differential-escort transformation one finds that, for any $p^* \in [0,\infty), \; \lambda > \frac1{1+p^*},$ and $\alpha>0,$
\begin{equation*}
\frac{\sigma_{p^*}\left[ \descort[\alpha\alpha^{-1}]{\pdf} \right]}{N_\lambda\left[ \descort[\alpha\alpha^{-1}]{\pdf} \right] } = \frac{ \sigma_{p^*}[ \pdf ] }{ N_\lambda[ \pdf ] } \geqslant \overline{K}_{p,\lambda}.
\end{equation*}
Thus, using the stretching effect on Rényi entropies and cumulative moments given in  properties~\ref{prop:Renyi-esc} and~\ref{prop:mu_p-esc}, it follows that
\begin{equation*}
\left( \frac{ \sigma_{p^*,\alpha} \left[ \descort[\alpha^{-1}]{\pdf} \right]}{ N_{\lambda_\alpha} \left[ \descort[\alpha^{-1}]{\pdf} \right]} \right)^\alpha \geqslant \overline K_{p,\lambda}.
\end{equation*}
with $\lambda_\alpha = 1 + \alpha (\lambda - 1)$ given in Eq.~\eqref{Eq:lambda_alpha} of property~\ref{prop:Renyi-esc}.
Finally, taking into account that the set of differentiable densities $\D$ is closed for the differential-escort transform, $\descort[\alpha^{-1}]{\D} = \D$, as discussed in references~\cite{Puertas19, Zozor17}, we can rename the parameters to close the proof.
\end{proof}
As previously evoked, analogous inequalities can be obtained by substituting Rényi entropies by Tsallis entropies.

Note that, as an application of theorem~\ref{EMineq3}, the probability density of the raised cosine distribution can be characterized as being the maximal $\lambda-$Rényi entropy with $\lambda=2$ (directly related with the so-called disequilibrium $D[\pdf]=e^{-R_2[\pdf]}$) when $\sigma_{2,\frac12}[\pdf]$ is fixed. However, although the logistic and the hyperbolic secant probability densities are encompassed by GTDs, they fall outside the application of this theorem.

% -------------------- cumulative Cramér-Rao

\subsection{Cramér-Rao like inequality}
The following theorem trivially follows by multiplying triparametric Stam like inequality~\eqref{ineq:trip_Stam}, and the entropy-momentum like stated in Theorem~\ref{EMineq3}:
\begin{theorem}\label{CRineq3}
Let be a real number\footnote{Again, in the limit $p \to +\infty,$ the inequality is still valid.} $p \in [1,\infty)$, and two real numbers $\lambda$ and $\beta$ such that $ \beta > \max\left(\lambda-1 , \frac{1-\lambda} p \right).$ Then, for any continuously differentiable probability density $\pdf$, the following inequality is fulfilled:
\begin{equation}\label{cumulative_CR}
\phi_{p,\beta}[\pdf] \, \sigma_{p^*,1+\beta-\lambda}[\pdf] \, \geqslant \, \phi_{p,\beta}[\minpdf{p}{\beta}{\lambda}] \, \sigma_{p^*,1+\beta-\lambda}[\minpdf {p}{\beta}{\lambda}] \, \equiv \, K_{p,\beta,\lambda}.
\end{equation} 
The family of minimizing densities is given by $\minpdf{p}{\beta}{\lambda}$ and the optimal bound is given by
\begin{equation*}
K_{p,\beta,\lambda} = (p^*)^{\frac 1\beta}\, \left( \frac{(1+\beta-\lambda)^{\frac{\lambda-1}{p^*}} \: a_{p,\beta,\lambda}^{\lambda-1}}{(p^*\beta+\lambda-1)^{\frac{p\beta+\lambda-1}{p}}}\right)^\frac{1}{\beta(1+\beta-\lambda)}.
\end{equation*}
\end{theorem}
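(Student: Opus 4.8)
The plan is to obtain \eqref{cumulative_CR} as the termwise product of the triparametric Stam inequality \eqref{ineq:trip_Stam} and the entropy--momentum inequality \eqref{cumulative_EM} of Theorem~\ref{EMineq3}, exploiting that both share the \emph{same} extremal density $\minpdf{p}{\beta}{\lambda}$. First I would record the two bounds in the convenient form
\begin{equation*}
\phi_{p,\beta}[\pdf]\,N_\lambda[\pdf]\,\geqslant\,\phi_{p,\beta}[\minpdf{p}{\beta}{\lambda}]\,N_\lambda[\minpdf{p}{\beta}{\lambda}],
\qquad
\frac{\sigma_{p^*,1+\beta-\lambda}[\pdf]}{N_\lambda[\pdf]}\,\geqslant\,\frac{\sigma_{p^*,1+\beta-\lambda}[\minpdf{p}{\beta}{\lambda}]}{N_\lambda[\minpdf{p}{\beta}{\lambda}]}.
\end{equation*}
Since every factor appearing here ($\phi_{p,\beta}$, $N_\lambda=e^{R_\lambda}>0$, and $\sigma_{p^*,1+\beta-\lambda}$) is strictly positive for the densities considered, multiplying two inequalities of the type $X\geqslant X_0>0$ and $Y\geqslant Y_0>0$ preserves the order, and the occurrences of $N_\lambda[\pdf]$ (respectively $N_\lambda[\minpdf{p}{\beta}{\lambda}]$) cancel. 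What remains is exactly $\phi_{p,\beta}[\pdf]\,\sigma_{p^*,1+\beta-\lambda}[\pdf]\geqslant\phi_{p,\beta}[\minpdf{p}{\beta}{\lambda}]\,\sigma_{p^*,1+\beta-\lambda}[\minpdf{p}{\beta}{\lambda}]$, which is \eqref{cumulative_CR}.

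Next I would settle the extremizers and the explicit value of $K_{p,\beta,\lambda}$. Both \eqref{ineq:trip_Stam} and \eqref{cumulative_EM} are saturated, up to a scaling that leaves the scale-invariant products $\phi_{p,\beta}N_\lambda$ and $\sigma_{p^*,1+\beta-\lambda}/N_\lambda$ unchanged, precisely by the family $\minpdf{p}{\beta}{\lambda}$; hence the product inequality is saturated by the same family, so $\minpdf{p}{\beta}{\lambda}$ is a minimizer. Conversely, if the product attains its bound then $X>X_0$ would force $XY>X_0Y\geqslant X_0Y_0$, so equality in the product compels equality in each factor, which identifies the minimizer (up to scaling) as $\minpdf{p}{\beta}{\lambda}$. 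The constant then factorizes as $K_{p,\beta,\lambda}=\big(\phi_{p,\beta}[\minpdf{p}{\beta}{\lambda}]\,N_\lambda[\minpdf{p}{\beta}{\lambda}]\big)\cdot\overline{K}_{p,\beta,\lambda}$, that is, the closed-form optimal Stam constant of \eqref{ineq:trip_Stam} from~\cite{Zozor17} times the constant $\overline{K}_{p,\beta,\lambda}$ already computed in Theorem~\ref{EMineq3}; substituting the two closed forms and simplifying should collapse to the displayed expression.

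The multiplication itself is immediate; the two genuine points of care are the following. First, one must check that the hypotheses $p\in[1,\infty)$ and $\beta>\max\!\left(\lambda-1,\tfrac{1-\lambda}p\right)$ place the parameters simultaneously in the validity domains of \emph{both} prior results --- the condition $\beta>\max\!\left(\lambda-1,\tfrac{1-\lambda}p\right)$ delivering Theorem~\ref{EMineq3} directly, while $\lambda>1-\beta p^*$ must hold for \eqref{ineq:trip_Stam}. I would verify this by splitting cases: for $\lambda\geqslant1$ one has $1-\beta p^*<1\leqslant\lambda$ automatically since $\beta,p^*>0$, whereas the sub-case $\lambda<1$ is the delicate one, where the comparison between $\tfrac{1-\lambda}p$ and the Stam threshold (hence between $p$ and $p^*$) enters and needs to be tracked. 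Second, and the only computational labour, is the algebraic reduction of $\big(\phi_{p,\beta}[\minpdf{p}{\beta}{\lambda}]N_\lambda[\minpdf{p}{\beta}{\lambda}]\big)\,\overline{K}_{p,\beta,\lambda}$ to the compact form of $K_{p,\beta,\lambda}$: this is routine but error-prone because of the competing Beta/Gamma normalizations folded into $a_{p,\beta,\lambda}$, and the $\lambda=1$ branch should be recovered as the limit of the $\lambda\neq1$ branch rather than computed separately. I expect no conceptual obstacle beyond this bookkeeping and the parameter-range check.
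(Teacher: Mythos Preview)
Your approach is exactly the one in the paper: the authors state that the inequality ``trivially follows'' from multiplying \eqref{ineq:trip_Stam} and \eqref{cumulative_EM}, with the same extremal family $\minpdf{p}{\beta}{\lambda}$, and they compute $K_{p,\beta,\lambda}$ by evaluating $\sigma_{p^*}$ and $\phi_{p,\lambda_0}$ on $\gauss_{p,\lambda_0}$ and transporting via Properties~\ref{prop:Fisher-esc}, \ref{prop:mu_plambda-esc} and Corollary~\ref{coroll} (your route through $\overline{K}_{p,\beta,\lambda}$ times the Stam constant is equivalent). Your extra caution about the parameter-range compatibility between the two inputs is well placed and in fact goes beyond what the paper spells out.
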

\begin{proof}
The proof trivially follows from equations~\eqref{ineq:trip_Stam} and~\eqref{cumulative_EM}. To obtain the optimal bound we simply express $\sigma_{p^*}[\gauss_{p,\lambda_0}]$ and $\phi_{p,\lambda_0}[\gauss_{p,\lambda_0}]$ from~\cite{Lutwak05} and later we apply properties~\eqref{prop:Fisher-esc},~\eqref{prop:mu_plambda-esc} and Corollary~\eqref{coroll}.
\end{proof}
Again, the probability density function of the raised cosine distribution minimizes the generalized $(2,3/2)-$Fisher information when the $(2,1/2)-$th order cumulative moment is fixed.  Finally, note that the functionals $\phi_{p,\beta}[\cdot] \, \sigma_{p^*,1+\beta-\lambda} [\cdot]$ and $\frac{\sigma_{p^*,1+\beta-\lambda}[\cdot]}{N_{\lambda}[\cdot]}$ are invariant under scaling transformations, but not under translation transformations. This drawback can be solved by the introduction of a central version of the cumulative moments, as done in Appendix~\eqref{app:central} where it is also shown that similar inequalities are satisfied with such central cumulative moments.

% ============================== Conclusion ============================== %

\section{Conclusions}\label{Sec_Conclusions}

In this paper, we mainly focused on probability densities linked to generalized trigonometric functions, and on their roles in extended informational inequalities. First, we defined a three-parameters family of probability densities, named \textit{generalized trigonometric densities}, which include the Gaussian density, generalized Gaussian densities, as well as hyperbolic secant, logistic, and raised cosine densities. This was precisely done by means of the so-called generalized trigonometric functions introduced by Lindqvist~\cite{Lindqvist95} and Drabek~\cite{Drabek99}. Moreover, we prove that for a fixed generalized Fisher information, the Rényi entropies (and the Shannon one as a particular case) are minimized by the introduced family of densities.  In addition, a bi-parametric class of generalized moments called \textit{cumulative moment}, was introduced. These moments are defined as the mean of the power of a deformed cumulative distribution, which, in turn, is defined via the cumulative function of the power of the probability density function. This second parameter tunes the tail weight of the henced defined deformed cumulative distribution. The usual moments are recovered as particular cases of these generalized moments, and other remarkable properties like order relation w.r.t. the first parameter, or adequate scaling behavior are proven.  In particular, we show that, for any bounded probability density, there exists a critical value for the second parameter (power of the deformed cumulative function) below which the whole subfamily of generalized moments is finite for any positive value of the first parameter (power applied to the deformed cumulative). In this generalized framework we have extended the entropy-moment and Cramér-Rao inequalities, showing that for a fixed generalized moment, both the Rényi entropies and the generalized Fisher information achieve their maximum and minimum values, respectively, precisely for the generalized trigonometric densities. In addition, we have highlighted the fact that GTDs and cumulative moments allow to formally reconstruct the whole subfamily of stretched Gaussian densities.

A first direction of perspectives to this work is to extend the study to multivariate densities. This is not an easy task especially dealing with such multivariate extension of the generalized Fisher information and of the cumulative moments, but also of the definition of the differential-escort transform itself since there is no unique way to impose the probability conservation (stretching the differential interval of the cartesian axes in the same manner, stretching radially,\ldots). More especially, one has to think about how to extend the transform and these measures (generalized Fisher information, cumulative moments) so that they conserve the ``nice'' properties of the scalar context derived in this paper. Even more difficult seems to be the extension of these ideas to the matrix context in which the Fisher information matrix gives rise to richer inequalities than their restriction obtained from the trace of this matrix (i.e., involving a norm of the gradient).

% ------------------------- Acknoledgments

\subsection*{Acknowledges}

DPC is very grateful to the  Ministry of Science and Innovation of Spain for its partial financial support under grants PID2020-115273GB-100 (AEI/FEDER, EU) and PID2023-153035NB-100 funded by \text{MICIU/AEI/10.13039/501100011033} and ERDF/EU
, as well as, for the warm hospitality during his stay at GIPSA–Lab of the University of Grenoble–Alpes where this work was partially carried out.

\subsection*{Conflict of interest statement}

The authors declare no conflicts of interest.

\subsection*{Data availability statement}

Data sharing is not applicable to this article as it has no associated data.

% ========================= Appendices ======================== %

\newpage \appendix 

% ========================= Lemma GDT ========================= %

\section{Proof of Lemma~\ref{Lemma:GTDs&gplambda}} \label{App1}

Given real numbers $p$ and $\lambda$ such that $p^* > 0$ and $\lambda > 1 - p^*, \, \lambda\neq 1$, the generalized  Gaussian density $g_{p,\lambda}: \support_{p,\lambda} \longrightarrow \Rset$, and a real $\alpha \neq 1$, the change of variables $y=y(x)$  given in definition~\ref{definition:diff-esc} can be computed as follows.

% -------------------- transform density

\subsection*{Transformed density expression}

First, for $\lambda < 1$ and $x \in \support_{p,\lambda},$
\begin{eqnarray*}
y(x) & = & a_{p,\lambda}^{1-\alpha}  \int_0^x \left( 1 + \left( (1 - \lambda)^\frac1{p^*} |t| \right)^{p^*} \right)_+^{\frac{1-\alpha}{\lambda-1}} \, dt\\[2.5mm]
& = & \frac{a_{p,\lambda}^{1-\alpha} }{ (1-\lambda)^\frac1{p^*} } \int_0^{(1-\lambda)^\frac1{p^*}\,x} (1 + |u|^{p^*})_+^{ \frac{1-\alpha}{\lambda-1} } \, du\\[2.5mm]
& = & \frac{a_{p,\lambda}^{1-\alpha}}{(1-\lambda)^\frac{1}{p^*}} \, \arsinh_{\frac{1-\lambda}{1-\alpha},p^*}\left( (1-\lambda)^\frac1{p^*} x \right).
\end{eqnarray*}
from definition in equation~\eqref{def:arcsinhpq} defining the generalized arsinh function.

In a similar way,  from equation~\eqref{def:arcsinpq} defining the generalized arcsin function, for $\lambda > 1$ one obtains
\begin{equation*}
y(x) = \frac{a_{p,\lambda}^{1-\alpha}}{(\lambda-1)^\frac1{p^*}} \, \arcsin_{\frac{1-\lambda}{1-\alpha},p^*}\left( (\lambda-1)^\frac1{p^*} x \right).
\end{equation*}

From these expressions, after simple algebraic manipulations, one gets 
\begin{equation*}
x(y) = \left\{\begin{array}{ccc}
|1-\lambda|^{-\frac1{p^*}} \sinh_{\frac{1-\lambda}{1-\alpha},p^*}\left( \widetilde y \right)& \mbox{if} & \lambda < 1,\\[2.5mm]
|1-\lambda|^{-\frac1{p^*}} \sin_{\frac{1-\lambda}{1-\alpha},p^*}\left( \widetilde y \right)& \mbox{if} & \lambda > 1
\end{array}\right. 
\qquad \mbox{with} \quad \widetilde y = \frac{|1-\lambda|^\frac{1}{p^*}}{a_{p,\lambda}^{1-\alpha}} \, y
\end{equation*}

Now, we apply  the definition~\ref{definition:diff-esc}, $\descort[\alpha]{g_{p,\lambda}}(y) = \left[ g_{p,\lambda}(x(y)) \right]^\alpha$ so that, using the pytheagorean like relation~\eqref{eq:pyth_rel_hyperbolic}, for $\lambda < 1$  it follows
\begin{eqnarray*}
\descort[\alpha]{g_{p,\lambda}}(y)
& = & a_{p,\lambda}^\alpha \left( 1 + (1-\lambda) \left[ (1-\lambda)^{-\frac1{p^*}} \, \sinh_{\frac{1-\lambda}{1-\alpha},p^*} \left( \widetilde y \right) \right]^{p^*} \right)_+^\frac{\alpha}{\lambda-1}\\[2.5mm]
& = & a_{p,\lambda}^\alpha \left( 1 + \left[ \sinh_{\frac{1-\lambda}{1-\alpha},p^*} \left( \widetilde y \right) \right]^{p^*} \right)_+^\frac{\alpha}{\lambda-1}\\[2.5mm]
& = & a_{p,\lambda}^\alpha \left( \left[ \cosh_{\frac{1-\lambda}{1-\alpha},p^*} \left( \widetilde y \right) \right]^{\frac{1-\lambda}{1-\alpha}} \right)_+^\frac{\alpha}{\lambda-1}\\
& = & a_{p,\lambda}^\alpha \left( \cosh_{\frac{1-\lambda}{1-\alpha},p^*} \left( \widetilde y \right) \right)_+^\frac{\alpha}{\alpha-1}.
\end{eqnarray*}
The case $\lambda > 1$ is treated similarly, leading to 
\begin{equation*}
\descort[\alpha]{g_{p,\lambda}}(y) = a_{p,\lambda}^\alpha \left( \cos_{\frac{1-\lambda}{1-\alpha},p^*} \left( \widetilde y \right) \right)_+^\frac{\alpha}{\alpha-1},
\end{equation*}
where the pytheagorean like relation~\eqref{eq:pyth_rel} has been used.

% -------------------- transform support

\subsection*{Transformed support $(\support_{p,\lambda})_\alpha$}

% ---------- lambda < 1

The transformed support $(\support_{p,\lambda})_\alpha = y\left( \support_{p,\lambda} \right) = \left( -  y_{p,\lambda,\alpha}  \, , \,  y_{p,\lambda,\alpha}  \right)$ can be obtained by applying the change of variables to the border points of the support  $\support_{p,\lambda} = ( - x_{p,\lambda} \, , \, x_{p,\lambda} ),$ with $x_{p,\lambda} = \left( \frac1{\lambda-1} \right)^{\frac1p}$ when $\lambda > 1$ and  $x_{p,\lambda} = +\infty$ for $\lambda < 1.$ For $\lambda < 1$ one has:
\[
y_{p,\lambda,\alpha} \equiv y\left( x_{p,\lambda} \right) =  \frac{ a_{p,\lambda}^{1-\alpha }}{ (1-\lambda)^\frac1{p^*} } \int_0^\infty \left( 1 + u^{p^*} \right)^{ \frac{1-\alpha}{\lambda-1} } \, du
\]
which is finite if and only if $p^* \left( \frac{ 1-\alpha }{ \lambda-1 } \right) < -1.$ Since $p^*>0$, $y_{p,\lambda,\alpha}$ is then finite when 
\[
\alpha < 1 - \frac{1-\lambda}{p^*}
\]
and in such  case  
\[
y_{p,\lambda,\alpha} = \frac{ a_{p,\lambda}^{1-\alpha} }{ p^*\,(1-\lambda)^\frac{1}{p^*} } \int_0^\infty ( 1 + v )^{ \frac{ 1-\alpha }{ \lambda-1 } } \, v^{-\frac1p} \, dv = 
%\pm 
\frac{ a_{p,\lambda}^{1-\alpha} }{ p^* \, ( 1-\lambda )^\frac{1}{p^*} } B\left( \frac1{p^*}, \; \frac{ 1-\alpha }{ 1-\lambda } - \frac1{p^*} \right)
\]
from~\cite[Eq.~3.194]{Gradshteyn15}.
For $\alpha \geqslant 1-\frac{1-\lambda}{p^*}, \; y_{p,\lambda,\alpha} = +\infty$. The quantity  $y_{p,\lambda,\alpha}$ can be expressed for any $\alpha$ in a more compact way using the generalized \textit{pi} constant given in~Eq.~\eqref{def:pipq},
\[
y_{p,\lambda,\alpha} = \frac{ a_{p,\lambda}^{1-\alpha} \, \pi_{\chi,p^*} }{ 2\,(1-\lambda)^\frac{1}{p^*}}  
\qquad \mbox{with} \qquad \chi = \left(\frac{p^*+1}{p^*}-\frac{1-\alpha}{1-\lambda}\right)^{-1} = \left( \frac{p^*+1}{p^*} \one_{\Rset_+} (1-\lambda) + \frac{\alpha-1}{|1-\lambda|} \right)^{-1}.
\]
We remark that $\pi_{\chi,p^*}<\infty$ for $\chi\in\Rset /[0,1],$ and infinite in other case.

% ---------- lambda > 1

In the case $\lambda>1$ we have that 

\[
y_{p,\lambda,\alpha} = \frac{ a_{p,\lambda}^{1-\alpha} }{ (\lambda-1)^\frac1{p^*} } \int_0^1 \left( 1-u^{p^*} \right)_+^{ \frac{ 1-\alpha }{ \lambda-1 } } \, du.
\]
This quantity is finite if and only if $\frac{ 1-\alpha }{ \lambda-1 } > -1$,
which is equivalent to 
\[
\alpha < \lambda.
\]
and in this case 
\[
y_{p,\lambda,\alpha} = \frac{ a_{p,\lambda}^{1-\alpha} }{ (\lambda-1)^\frac1{p^*} } \, \arcsin_{\frac{1-\lambda}{1-\alpha},p^*} (1) = \frac{  a_{p,\lambda}^{1-\alpha} \, \pi_{\xi,p^*} }{ 2 \, (\lambda-1)^\frac{1}{p^*} }
\qquad \mbox{with} \qquad \xi = \frac{ \lambda-1 }{ \alpha-1 } = \left( \frac{p^*+1}{p^*} \one_{\Rset_+} (1-\lambda) + \frac{\alpha-1}{|1-\lambda|} \right)^{-1}
\]
Again, we remark that $\pi_{\xi,p^*} < \infty$ for $\xi \in \Rset / [0,1],$ and infinite in other case.

% ========================= TheoremaEMineq ========================= %

\section{Proof of Theorem~\ref{EMineq3}}\label{App_proofs}

Let be $\pdf$ a probability density continuously differentiable, and a positive number $\alpha>0.$ Using the composition property~\eqref{prop:composition} one gets $\descort[\alpha\alpha^{-1}]{\pdf} = \pdf.$ In addition, by virtue of the properties~\ref{prop:Renyi-esc} and~\ref{prop:mu_p-esc},
\begin{equation*}
\frac{\sigma_{p^*}[\pdf]}{N_\lambda[\pdf]} = \left( \frac{\sigma_{p^*,\alpha}\left[ \descort[\alpha^{-1}]{\pdf} \right]}{N_{\lambda_\alpha}\left[ \descort[\alpha^{-1}]{\pdf} \right]} \right)^\alpha.
\end{equation*}
with $\lambda_\alpha = 1 + \alpha (\lambda-1)$ given eq.~\eqref{Eq:lambda_alpha} of property~\ref{prop:Renyi-esc}. Now, for  $\pdf$ that satisfies  the conditions of the generalized entropy-momentum inequality (see Eq.~\eqref{ineq:bip_E-M}, and references~\cite{Lutwak04, Bercher12}), and  real numbers $p, \; \lambda$ such that $\lambda > \frac1{1+p^*},\; p^*\in[0,\infty),$ the following inequality  is fulfilled
\begin{equation}\label{ineq:1}
\left( \frac{\sigma_{p^*,\alpha}\left[ \descort[\alpha^{-1}]{\pdf} \right]}{N_{\lambda_\alpha}\left[ \descort[\alpha^{-1}]{\pdf} \right]} \right)^\alpha
\geqslant
\frac{\sigma_{p^*}[g_{p,\lambda}]}{N_\lambda[g_{p,\lambda}]},
\end{equation}
with  equality for $\pdf = g_{p,\lambda}.$  Since $\descort[\alpha^{-1}]{\D} = \D$ being $\D$ the set of continuously differentiable probability densities, by a change of notation the last equation trivially rewrites as
\begin{equation}\label{ineq:2}
\left(\frac{\sigma_{p^*,\alpha}[\pdf]}{N_{\lambda_\alpha}[\pdf]}\right)^\alpha
\geqslant
\frac{\sigma_{p^*}[g_{p,\lambda}]}{N_{\lambda}[g_{p,\lambda}]}.
\end{equation}
As the minimum of the left handside of inequality~\eqref{ineq:1} is reached for $\pdf = g_{p,\lambda}$, the minimum of the left handside of inequality~\eqref{ineq:2} is reached  for $\pdf = \descort[\alpha^{-1}]{g_{p,\lambda}}$. Written for the parameters $\overline \alpha = 1 + \beta - \lambda$ instead of $\alpha$, and $\lambda_0 = \frac{\beta}{1 + \beta - \lambda} > 0$ instead of $\lambda$, the inequality can be then rewritten, for $\lambda_0 > \frac1{1+p^*}$ and $p^* \in [0,\infty)$,  under the form
\begin{equation*}
\frac{\sigma_{p^*,1+\beta-\lambda}\left[ \pdf \right]}{N_{\lambda}\left[ \pdf \right]}
\geqslant
\left(\frac{\sigma_{p^*}\left[ g_{p,\lambda_0} \right]}{N_{\lambda_0}\left[ g_{p,\lambda_0} \right]}\right)^\frac1{1+\beta-\lambda} \equiv \overline K_{p,\beta,\lambda},
\end{equation*}
%
%\Ver{Quitamos aquí la barra o la dejamos? La de K? ERA LA DE $\pdf$}
with equality for $\pdf = \descort[\overline \alpha^{\, -1}]{g_{p,\lambda_0}}\equiv \minpdf{p}{\beta}{\lambda}$ , up to a scaling, from Corollary~\eqref{coroll}.
Note then that the condition $p^*\in[0,\infty)$ remains invariant. The condition $\overline \alpha>0,$ leads to $\beta>\lambda-1.$ On the other hand, the condition $\lambda_0>\frac1{1+p^*},$ joined with the previous one, impose that $\beta>\frac{1-\lambda}{p^*}$ after simple algebraic manipulations. Note that both conditions can be summarized  as $\beta > \max\left( \lambda-1 \, , \, \frac{1-\lambda}{p^*} \right)$.

Finally, the minimal bound can be easily computed, noting that
\[
\overline K_{p,\beta,\lambda} = \frac{\sigma_{p^*,1+\beta-\lambda}\left[ \minpdf{p}{\beta}{\lambda} \right]}{N_\lambda\left[ \minpdf{p}{\beta}{\lambda} \right]} = \left( \frac{\sigma_{p^*}\left[ g_{p,\lambda_0} \right]}{N_{\lambda_0}\left[ g_{p,\lambda_0} \right]} \right)^\frac1{\overline\alpha}.
\]
Then, the power Rényi entropy is computed as~\cite{Lutwak05}
\begin{equation*}
N_{\lambda_0}\left[ \gauss_{p,\lambda_0} \right] = \left\{\begin{array}{lll}
\left( \frac{p^* \lambda_0}{p^* \lambda_0 + \lambda_0 - 1} \right)^\frac1{1-\lambda_0} \, a_{p,\lambda_0}^{-1} & \mbox{for}  & \lambda_0 \neq 1,\\[7.5mm]
\displaystyle e^\frac1{p^*}\frac{2 \, \Gamma(\frac1{p^*})}{p^*} & \mbox{for} &  \lambda_0 = 1.
\end{array}
\right.
\end{equation*}
Simple algebra allows to obtain for the $p^*-$order standard deviation the expression
\begin{equation*}
\sigma_{p^*}\left[ \gauss_{p,\lambda_0} \right] = \left( p^* \lambda_0 + \lambda_0 - 1 \right)^{-\frac1{p^*}}
\end{equation*}
The expression of the bound follows from the compilation of these two results and some algebraic manipulation.

% ================= Cumulative central moments ================= %

\section{Cumulative central moments}\label{app:central}

A lack of the family of generalized cumulative moments is the fact that it is not shift-invariant (i.e., invariant under translation transformations), in contrast to what happens with the family of central moments and their generalization. The aim of this appendix is to propose a centered version of the cumulative moments which are precisely shift-invariant. In particular, these moments correspond with the central moment of the differential-escort transformed density. However, requiring the shift-invariance notably complexifies the definition of such moments compared to the non-centered case. 
\begin{definition}[Cumulative central moment] Given two real numbers $p>0$, $\lambda\in\Rset$, and a probability density $\pdf$. Then we denote by $\mu_{p,\lambda}^{(c)}[\pdf]$ the $(p,\lambda)-$cumulative central moment of $\pdf$,  and define it as
\begin{eqnarray*}
\mu_{p,\lambda}^{(c)}[\pdf]
& = & \int_\Rset \left| \int_{\Rset} \left( \int_u^x [\pdf(t)]^{1-\lambda} dt \right) \pdf(u) du \right|^p \pdf(x) \, dx\\[2mm]
&=& \left\langle \left| \wcdf{\lambda}(x) - \left\langle \wcdf{\lambda}(x) \right\rangle_\pdf \right|^p \right\rangle_\pdf \nonumber
\end{eqnarray*}
\end{definition}
The hence defined central cumulative moment $\mu_{p,\lambda}^{c}$ is invariant under translation transformations,
\begin{property}[Shift-invariance of the cumulative central moments]
Given the translation transform $\pdf_{[a]}(x) = \pdf(x-a)$ for any $a \in \Rset$, 
\begin{equation*}
\mu_{p,\lambda}^{(c)}\left[ \pdf_{[a]} \right] = \mu_{p,\lambda}^{(c)}[\pdf]
\end{equation*}
\end{property}
\begin{proof}
From the definition, by successive changes of variables $t \to t-a, \; u \to u-a \; , x \to x-a$, one has
\begin{eqnarray*}
\mu_{p,\lambda}^{(c)}[\pdf_{[a]}]
& = & \int_\Rset \left| \int_{\Rset} \left( \int_u^x [\pdf(t-a)]^{1-\lambda} \, dt \right) \, \pdf(u-a) \, du \right|^p \, \pdf(x-a) \, dx\\[2mm]
& = &\int_\Rset \left| \int_{\Rset} \left( \int_{u-a}^{x-a} [\pdf(t)]^{1-\lambda} \, dt \right) \, \pdf(u-a) \, du \right|^p \, \pdf(x-a) \, dx\\
&=&\int_{\Rset} \left| \int_\Rset \left( \int_u^{x-a} [\pdf(t)]^{1-\lambda} \, dt \right) \, \pdf(u) \, du \right|^p \, \pdf(x-a) \, dx\\[2mm]
& = & \int_\Rset \left| \int_\Rset \left( \int_u^x [\pdf(t)]^{1-\lambda} \, dt \right) \, \pdf(u) \, du \right|^p \, \pdf(x) \, dx\\[2mm]
& = & \mu_{p,\lambda}^{(c)}[\pdf].
\end{eqnarray*}
\end{proof}

In the following lines we show that these centered moments also satisfy entropy-momentum and Cramér-Rao like inequalities. First, note that the products $\sigma_p^{(c)}[\pdf]\phi_{p,\lambda}[\pdf]$ and $\frac{\sigma_p^{(c)}[\pdf]}{N{\lambda}[\pdf]}$ are invariant under translation operation of the probability density. Then, given any probability density $\pdf$ with a finite mean $\langle x\rangle_\pdf$, one can consider the zero-mean density $\pdf_{[\langle x \rangle_f]}$, for which $\sigma_p^{(c)} \left[ \pdf_{[\langle x \rangle_f]} \right] = \sigma_p\left[ \pdf_{[\langle x \rangle_f]} \right]$.
Consequently,
$$\sigma_p^{(c)}[\pdf] \phi_{p,\lambda}[\pdf] = \sigma_p^{(c)}\left[ \pdf_{[\langle x \rangle_f]} \right] \phi_{p,\lambda} \left[ \pdf_{[\langle x \rangle_f]} \right] = \sigma_p \left[ \pdf_{[\langle x \rangle_f]} \right] \phi_{p,\lambda} \left[ \pdf_{[\langle x \rangle_f]} \right] \, \geqslant \, K_{p,\lambda}.$$ 
Similarly,

$$\frac{ \sigma_p^{(c)}[\pdf] }{ N_{\lambda}[\pdf] } = \frac{ \sigma_p^{(c)} \left[ \pdf_{[\langle x \rangle_f]} \right] }{ N_{\lambda} \left[ \pdf_{[\langle x \rangle_f]} \right] } = \frac{ \sigma_p \left[ \pdf_{[\langle x \rangle_f]} \right] }{ N_{\lambda} \left[ \pdf_{[\langle x \rangle_f]} \right] } \, \geqslant \, \overline K_{p,\lambda}.$$ 
In conclusion, inequalities~\eqref{cumulative_CR} and~\eqref{cumulative_EM} are still valid when $\sigma_{p,\lambda}[\pdf]$ is substituted by $\sigma^{(c)}_{p,\lambda}[\pdf],$ with the same minimal bounds and minimizing densities (up to not only scaling changes but also translations). Moreover, all other properties of the generalized cumulative moments can be extended to the central version without any additional cost. 

\bibliographystyle{unsrt}
\bibliography{refs}
%\printbibliography
\end{document}